\newcommand{\nicefrac}{\frac}
\definecolor{Darkblue}{rgb}{0,0,0.4}
\definecolor{Brown}{cmyk}{0,0.61,1.,0.60}
\definecolor{Purple}{cmyk}{0.45,0.86,0,0}
\newtheorem{theorem}{Theorem}[section]
\newtheorem{definition}[theorem]{Definition}
\newtheorem{lemma}[theorem]{Lemma}
\newtheorem{claim}[theorem]{Claim}
\newtheorem{remark}{Remark}
\newtheorem{fact}[theorem]{Fact}
\newtheorem{assumption}[theorem]{Assumption}
\numberwithin{algorithm}{section}
\newcommand{\OPT}{\ensuremath{\mathsf{OPT}}\xspace}
\newcommand{\ignore}[1]{}
\newcommand{\C}{\ensuremath{\mathbf{a}}\xspace}
\newcommand{\W}{\ensuremath{W}\xspace}
\newcommand*{\myproofname}{Proof of Claim}
\newcommand{\R}[0]{{\ensuremath{\mathbb{R}}}}
\newcommand{\poly}{\operatorname{poly}}
\newcommand{\initOneLiners}{
    \setlength{\itemsep}{0pt}
    \setlength{\parsep }{0pt}
    \setlength{\topsep }{0pt}
}
\newenvironment{OneLiners}[1][\ensuremath{\bullet}]
    {\begin{list}
        {#1}
        {\initOneLiners}}
    {\end{list}}
\def\adap{\ensuremath{{\sf adap}}\xspace}
\newcommand{\sse}{\subseteq}
\def\alg{\ensuremath{{\sf alg}}\xspace}
\def\E{\ensuremath{{\mathbb E}}}
\def\calH{\ensuremath{{\mathcal H}}}
\newcommand{\elt}{\ensuremath{ {\sf elt}}}
\newcommand{\depth}{deepness\xspace}
\def\smlc{\ensuremath{{\sf StocMLSC}}\xspace}
\newcommand{\T}{\ensuremath{\mathcal{T}}\xspace}
\newcommand{\one}[1]{\mathbf{1}{(#1)}}
\newcommand{\no}{\texttt{no}\xspace}
\newcommand{\yes}{\texttt{yes}\xspace}
\newcommand{\fmax}{f^{\max}}
\def\p{\mathbf{p}}
\DeclareMathOperator*{\argmax}{argmax}
\title{Adaptivity Gaps for Stochastic Probing:\\
Submodular and XOS Functions}
\author{
Anupam Gupta
\thanks{Computer Science Department, Carnegie Mellon
     University, Pittsburgh, PA 15213, USA. Research partly supported by
     NSF awards CCF-1319811, CCF-1536002, CCF-1540541 and CCF-1617790.}
 \and Viswanath Nagarajan
\thanks{Department of Industrial and Operations Engineering, University of Michigan, Ann Arbor, MI 48109, USA.}
\and Sahil Singla$^*$
}
\date{\today}
\begin{document}

\maketitle

\begin{abstract}
  Suppose we are given a submodular function $f$ over a set of elements,
  and we want to maximize its value subject to certain constraints. 
  Good approximation algorithms are known for such problems under  both monotone and 
  non-monotone submodular functions. We consider these problems in a  stochastic setting, where elements are not all active  and we
  can only get value from active elements. Each element $e$ is active independently
  with some known probability $p_e$, but we don't know the element's status 
   \emph{a priori}. We find it out only when we \emph{probe} the
  element $e$---probing reveals  whether it's active or not,
  whereafter we can use this information to decide which other elements
  to probe. Eventually, if we have a probed set $S$ and a subset
  $\text{active}(S)$  of active elements in $S$, we can
  pick any $T \sse \text{active}(S)$ and get value $f(T)$. 
Moreover, the sequence of elements we
  probe must satisfy a given \emph{prefix-closed constraint}---e.g.,
  these may be given by a matroid, or an orienteering constraint, or
  deadline, or precedence constraint, or an arbitrary downward-closed 
  constraint---if we can probe some sequence of
  elements we can probe any prefix of it.  What is a
  good strategy to probe elements to maximize the expected value?  

  In this paper we study the gap between adaptive and non-adaptive
  strategies for $f$ being a submodular or a fractionally subadditive
  (XOS) function. If this gap is small, we can focus on finding good
  non-adaptive strategies instead, which are easier to find as well as
  to represent.    We show that the adaptivity gap is  a constant for  monotone and non-monotone submodular functions, and logarithmic for XOS   functions of small \emph{width}. These bounds are nearly tight. Our techniques show new ways of
  arguing about the optimal adaptive decision tree for stochastic
  problems.
  \end{abstract}


\section{Introduction}
\label{sec:introduction}  

Consider the problem of maximizing a submodular function $f$ over a set
of elements, subject to some given constraints. This has been a very
useful abstraction for many problems, both theoretical (e.g., the
classical $k$-coverage problem~\cite{SW10}), or practical (e.g., the
influence maximization problem~\cite{KKT15}, or many problems in machine
learning~\cite{Krause13}). We now know how to perform constrained
submodular maximization, both when the function is
monotone~\cite{FNW78,CCPV11}, and  also when the function may be
non-monotone but
non-negative~\cite{FMV-SICOMP11,LMNS-STOC09,FNS-FOCS11}.  In this paper
we investigate how well we can solve  this problem if the instance is not
deterministically known up-front, but there is uncertainty in the input.

Consider the following setting. We  have a submodular function over
a ground set of elements $U$. But the elements are not all active, and
we can get value only for active elements.  The bad news is that \emph{a
  priori} we don't know the elements' status---whether it is active or
not. The good news is that each element $e$ is active independently with
some known probability $p_e$.  We find out an element $e$'s status only
by \emph{probing} it. Once we know its status, we can use this
information to decide which other elements to probe next, and in what
order. (I.e.,  be \emph{adaptive}.) We have some constraints on
which subsets we are allowed to probe. Eventually, we stop with some
probed set $S$ and  a known subset $\text{active}(S)$ of the active
elements in $S$. At that time we can pick any $T \sse \text{active}(S)$
and get value $f(T)$.\footnote{If the function is monotone, clearly we
  should choose $T = \text{active}(S)$.}  What is a good strategy to
probe elements to maximize  expected value?

Since this sounds quite abstract, here is an example. In the setting of
influence maximization, the ground set is a set of email addresses (or
Facebook accounts), and for a set $S$ of email addresses $f(S)$ is the
fraction of the network that can be influenced by seeding the set $S$.
But not all email addresses are still active. For each email address
$e$, we know the probability $p_e$ that it is active. (Based, e.g., on
when the last time we know it was used, or some other machine learning
technique.) Now due to time constraints, or our anti-spam policies, or
the fact that we are risk-averse and do not want to make introductory
offers to too many people---we can only probe some $K$ of these
addresses, and make offers to the active ones in these $K$, to maximize
our expected influence. Observe that it makes sense to be adaptive---if
\texttt{t.theorist@cs.cmu.edu} happens to be active we may not want to
probe \texttt{t.theorist@cmu.edu}, since we may believe they are the
same person.

A different example is  robot path-planning. We have a robot that
can travel at most distance $D$ each day and is trying the maximize
value by picking items. The elements are locations, and an
element is active if the location has an item to be picked up. (Location
$e$ has an item with probability $p_e$ independent of all others.)
Having probed $S$, if $T$ is the subset of active locations, the value
$f(T)$ is some submodular function of the set of elements---e.g., the
number of distinct items. 

There are other examples: e.g., 
Bayesian mechanism design problems  (see~\cite{GN13} for details) and stochastic set cover
problems that arise in database applications~\cite{LiuPRY08,DHK14}.

The question that is of primary interest to us is the following:
\emph{Even though our model allows for adaptive queries, what is the
  benefit of this adaptivity?} Note that there is price to adaptivity:
the optimal adaptive strategy may be an exponentially-large decision
tree that is difficult to store, and also may be computationally
intractable to find. Moreover, in some cases the adaptive strategy would
require us to be sequential (probe one email address, then probe the
next, and so on), whereas a non-adaptive strategy may be just a set of
$K$ addresses that we can probe in parallel. So we want to bound the
\emph{adaptivity gap}: the ratio between the expected value of the best
adaptive strategy and that of the best non-adaptive strategy. 
Secondly, if this adaptivity gap is small, we would like to find the
best non-adaptive strategy efficiently (in polynomial time). This would
give us our ideal result: a non-adaptive strategy that is within a small
factor of the best adaptive strategy.

The goal of this work is to get such results for as broad a class of
functions, and as broad a class of probing constraints as possible.
 Recall that we were not allowed to probe all the elements, but only those which
satisfied some problem-specific constraints (e.g., probe at
most $K$ email addresses, or probe a set of locations that can be
reached using a path of length at most $D$.)

\subsection{Our Results}
\label{sec:results}

In this paper, we allow very general probing constraints: the sequence
of elements we probe must satisfy a given \emph{prefix-closed
  constraint}---e.g., these may be given by a matroid, or an
orienteering constraint, or deadline, or precedence constraint, or an
arbitrary downward-closed constraint---if we can probe some sequence of
elements we can probe any prefix of it. We cannot hope to get small
adaptivity gaps for arbitrary functions (see \S\ref{sec:adapgapbad} 
for a monotone $0-1$ function  where the gap is exponential in $n$ even for
 cardinality constraint),
 and hence we have to look at
interesting sub-classes of functions.

\paragraph{Submodular Functions.} Our first set of results are for the
case where the function $f$ is a non-negative submodular function. The
first result is for monotone functions.

\begin{theorem}[Monotone Submodular]
  \label{thm:monotSubmod} 
  For any monotone non-negative submodular function $f$ and any
  prefix-closed probing constraints, the stochastic probing problem has
  adaptivity gap at most $3$.
\end{theorem}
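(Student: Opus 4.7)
The plan is to build a non-adaptive strategy from the optimal adaptive strategy by random-path simulation. View the optimal adaptive strategy $\tstar$ as a decision tree: each internal node is labeled by an element to probe, and the two outgoing branches correspond to the active/inactive outcomes of that probe. The non-adaptive algorithm $\alg$ is defined by sampling fresh independent Bernoulli-$p_e$ coins, tracing the root-to-leaf path $P$ that these coins induce in $\tstar$, and probing the elements along $P$ in order. Because the probing constraints are prefix-closed, any such $P$ is itself a feasible probing sequence, so $\alg$ is a well-defined non-adaptive strategy.

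Let $\omega \in \{0,1\}^U$ denote the true activation vector (independent Bernoulli-$p_e$ coordinates) and $\omega'$ an independent copy used only to simulate the path. Writing $T(\cdot)$ for the root-to-leaf path functional of $\tstar$, the adaptive and non-adaptive values are
\[
\opt \;=\; \E_\omega\bigl[ f(T(\omega)\cap\omega) \bigr],
\qquad
\alg \;=\; \E_{\omega,\omega'}\bigl[ f(T(\omega') \cap \omega) \bigr].
\]
The two paths $T(\omega)$ and $T(\omega')$ are identically distributed, but the first is tightly coupled with the activations $\omega$ scoring it while the second is independent of $\omega$. The central claim is $\opt \le 3\,\alg$.

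To establish this I plan a coupling-plus-subadditivity argument. Fix a realization $\omega$ and set $A := T(\omega)\cap\omega$. Drawing an independent copy $\omega'$ and using monotone subadditivity together with monotonicity,
\[
f(A) \;\le\; f\bigl(T(\omega)\cap\omega'\bigr) \;+\; f\bigl(T(\omega)\cap(\omega\setminus\omega')\bigr).
\]
Taking expectations and using the $\omega \leftrightarrow \omega'$ symmetry (both are i.i.d.), the first summand contributes exactly $\alg$. The heart of the proof is to bound the expected value of the second summand --- the ``adaptive surprise'' contributed by elements active under $\omega$ but inactive in the independent simulation $\omega'$ --- by $2\,\alg$. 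My plan is to decompose this term along the first node at which $T(\omega)$ and $T(\omega')$ diverge; the subtree of $\tstar$ hanging off that node is probed using activations independent of those on the shared prefix, so the contribution of its active elements can be charged to a fresh non-adaptive simulation of the same subtree, with a symmetric pairing between the two outgoing branches of the divergence node producing the factor of $2$.

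The hard part is this re-routing analysis. The challenge is that $T(\omega)$ and the set $\omega\setminus\omega'$ are correlated through their shared dependence on $\omega$, so one cannot naively swap in an independent path. Overcoming this requires exploiting two structural facts: the subtree of $\tstar$ hanging off any node depends only on the activations of elements probed within that subtree (which are independent of those probed above), and monotone submodularity lets one replace conditional values by unconditional simulation expectations at the cost of only a constant factor. This is where the specific constant $3$ arises, and where monotone submodularity (rather than mere subadditivity) is used most crucially. Combining the bound $2\,\alg$ on the surprise term with the $\alg$ bound on the first summand completes $\opt \le 3\,\alg$.
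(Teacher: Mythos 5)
Your setup is fine as far as it goes: the natural non-adaptive strategy (trace a fresh random path down $\tstar$ and probe it) is exactly the one the paper analyzes, the symmetry step $\E[f(T(\omega)\cap\omega')]=\E[f(T(\omega')\cap\omega)]=\alg$ is valid, and so the decomposition $\opt \le \alg + \E\bigl[f\bigl(T(\omega)\cap(\omega\setminus\omega')\bigr)\bigr]$ is a legitimate starting point. But the entire content of the theorem now sits in the claim that the ``surprise'' term is at most $2\,\alg$, and that claim is not proved; it is only a plan. Note it is also strictly stronger than anything you could borrow from the statement being proved: monotonicity only gives surprise $\le \opt$, so even granting the theorem you would get surprise $\le 3\,\alg$ and hence $\opt\le 4\,\alg$, not $3$. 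When the $p_e$ are small the surprise set is essentially all of $T(\omega)\cap\omega$, so your needed bound is essentially ``$\opt\le 2\,\alg$ in the small-probability regime,'' which is not known a priori and certainly does not follow from subadditivity or a symmetry observation.

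The concrete gap is in the ``re-routing'' step. Conditioned on the first divergence node $v$, the surprise elements all lie in the subtree below the branch that $T(\omega)$ takes at $v$, and within that subtree the path is still chosen \emph{adaptively} by $\omega$ while being scored by (a thinning of) $\omega$. So the quantity you want to charge is an adaptive-type value of the subtree, and ``charging it to a fresh non-adaptive simulation of the same subtree'' is precisely the statement of the theorem for that subtree: the argument is circular unless you set up a genuine induction, and then you must (a) track the mismatch between what the adaptive and non-adaptive sides have already collected, so that both recursions are run on the \emph{same} contracted function $f_S$ (this is exactly the difficulty the paper resolves by giving the adaptive side the non-adaptive picks ``for free'' and contracting $R\cup e_I$ on both sides), and (b) do the probability bookkeeping at the divergence node, where the event ``$T(\omega)$ goes yes and $\omega'$ says no'' has probability $p_v(1-p_v)$ times the no-prior-divergence probability $\prod(p_e^2+q_e^2)$, which does not line up termwise with the probability $p_v\prod q_e$ that the simulated path enters that subtree. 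In the paper this mismatch is exactly what the stem decomposition, Lemma~\ref{lemma:stemmass} and the inequality of Claim~\ref{claim:cuteineq} (which is where the factor $2$, and then $3$, actually comes from) are for; your ``symmetric pairing between the two outgoing branches'' is an assertion, not a substitute for that calculation. Until the surprise bound is carried out with such an induction and the constants verified to close, the proof is incomplete.
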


The previous results in this vein either severely restricted the
function type (e.g., we knew a logarithmic gap for matroid rank
functions~\cite{GNS16}) or the probing constraints (e.g., Asadpour et
al.~\cite{AN16} give a gap of $\frac{e}{e-1}$ for matroid probing
constraints). We discuss these and other prior works in
\S\ref{sec:related}.

There is a lower bound of $\frac{e}{e-1}$ on the adaptivity gap for
monotone submodular functions with prefix-closed probing constraints (in
fact for the rank function of a partition matroid, with the constraint
being a simple cardinality constraint). It remains an interesting open problem 
 to close this gap.

We then turn to non-monotone submodular functions, and again give a
constant adaptivity gap. While the constant can be improved slightly, we
have not tried to optimize it, focusing instead on clarity of
exposition.

\begin{theorem}[Non-Monotone Submodular]
  \label{thm:NmonotSubmod} 
  For any non-negative submodular function $f$ and any prefix-closed
  probing constraints, the stochastic probing problem has adaptivity gap
  at most $40$.
\end{theorem}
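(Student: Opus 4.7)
The plan is to bootstrap Theorem~\ref{thm:monotSubmod} for the non-monotone case by randomizing the leaf-wise selection rule of the optimal adaptive strategy and then reducing to a monotone-output probing instance, in the spirit of the Feige--Mirrokni--Vondr\'ak analysis of offline non-monotone submodular maximization. Let $\sigma^*$ be an optimal adaptive strategy with value $V = \mathbb{E}[f(T^*)]$, where at each leaf $T^* = \argmax_{T \subseteq A} f(T)$ is the subset selected from the active set $A = \text{active}(S)$ among the probed elements $S$. I would first modify $\sigma^*$ by replacing the $\argmax$ rule at each leaf with the randomized rule ``include each element of $A$ independently with probability $1/2$,'' yielding a set $R \subseteq A$. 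Feige's random-subset lemma for non-negative submodular functions then gives $\mathbb{E}[f(R) \mid A] \ge \tfrac14 f(T^*)$, so the modified adaptive strategy $\sigma'$ has value at least $V/4$.

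The advantage of $\sigma'$ is that it performs no intelligent selection at its leaves: its output is simply $R = \{e \in S : e \text{ is active and } C_e = 1\}$, where the $C_e$'s are independent fair coins. Bundling ``$e$ is active'' with ``$C_e = 1$'' gives a new probing instance with activation probabilities $\tilde p_e := p_e/2$, on which $\sigma'$ is an adaptive probing strategy whose output is always the full $\tilde p$-active subset of what it probed --- the monotone-output setting to which Theorem~\ref{thm:monotSubmod} is designed to apply. Invoking Theorem~\ref{thm:monotSubmod} would yield a non-adaptive probing strategy $\tau$ with value at least $V/12$ on the modified instance. Converting $\tau$ back to the original instance by probing the same elements and then outputting the \emph{best} subset of the original active set can only increase the value, and careful accounting of the multiplicative losses (Feige's $1/4$, Theorem~\ref{thm:monotSubmod}'s $1/3$, plus the small symmetrization overhead discussed below) yields the claimed $1/40$ bound.

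The main obstacle is the ``apply Theorem~\ref{thm:monotSubmod}'' step itself: the theorem is phrased for a \emph{monotone} submodular objective, but in the modified instance the function $f$ is still non-monotone --- only the \emph{output rule} has been made monotone-like. The natural surrogate $h(S) := \mathbb{E}[f(X_S)]$ with $X_S \subseteq S$ a $\tilde p$-random subset is submodular but not necessarily monotone, while the monotone closure $\bar f(S) := \max_{T \subseteq S} f(T)$ is monotone but not necessarily submodular (graph cut functions already furnish counterexamples). I would handle this by opening up the proof of Theorem~\ref{thm:monotSubmod} and re-executing it with $h$ in place of $f$, using the FMV-style bound $\mathbb{E}[f(X_S \cup T)] \ge (1 - \tilde p_{\max}) f(T)$ to control the deficit in monotonicity; alternatively, one can construct a genuine monotone submodular minorant of $h$ with comparable adaptive value. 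Either route costs an additional small constant and, together with the losses above, gives the final $40$-factor bound.
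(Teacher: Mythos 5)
Your opening moves match the paper's: the paper also first uses the Feige et al.\ bound~(\ref{eq:fmv}) to pass (at a factor-$4$ loss) from the optimal adaptive strategy to an ``online'' adaptive strategy $\adap_{on}$ that simply keeps each probed active element with probability $\nicefrac12$, which is exactly your device of folding the fair coins into halved activation probabilities. The gap is in the step you defer: ``open up the proof of Theorem~\ref{thm:monotSubmod} and re-execute it'' on the still non-monotone objective, claiming only a small constant loss. That re-execution does not go through, and not because of a bounded ``deficit in monotonicity.'' The monotone proof uses monotonicity in two structurally essential ways: (a) inequality~(\ref{eq:monsub1a}) hands the adaptive side the random set $R$ ``for free,'' and (b) the first inequality in Lemma~\ref{lemma:stemmass} lower-bounds $f(R)$ by the best singleton in $R$, i.e.\ it lets the non-adaptive side accumulate \emph{all} sampled elements down the path. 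Your proposed bound $\E[f(X_S\cup T)]\ge (1-\tilde p_{\max})f(T)$ (which is essentially the Buchbinder et al.\ sampling lemma the paper invokes as Lemma~\ref{lem:BFNS}) repairs only (a); it says nothing about (b), and for non-monotone $f$ greedily keeping everything sampled along the path can destroy value, so the surrogate $h(S)=\E[f(X_S)]$ does not satisfy the stem inequality the induction needs. Moreover, once you contract picked elements, the marginal functions $f_S$ can take negative values, so the inductive hypothesis ``non-negative monotone submodular'' fails at every level; and your fallback of a ``genuine monotone submodular minorant of $h$ with comparable adaptive value'' is asserted, not constructed (the paper explicitly notes that even $\fmax$ need not be submodular, and no such minorant is known in general).

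The paper's actual resolution is a different induction, not a perturbation of the monotone one: it defines surrogates $\overline{\adap}$ and $\overline{\alg}$ (Definition~\ref{defn:proxy}) in which the non-adaptive side collects at most \emph{one} element per stem---the best positive marginal among the sampled stem elements, and only with probability $\nicefrac12$---then (i) compares $\overline{\adap}$ to $\adap_{on}$ by a whole-path argument using Lemma~\ref{lem:BFNS} (factor $2$), (ii) shows $\alg\ge\overline{\alg}$ by telescoping marginals and the fact that $\alg$ takes the $\fmax$ over its sampled set, and (iii) proves $\overline{\alg}\ge\frac15\overline{\adap}$ by induction on trees with possibly negative $g$, applying the stem inequality only to the truncated weights $a_i=\max\{g(i),0\}$. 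This is where the $4\times 2\times 5=40$ comes from; your $4\times 3$ accounting plus ``small symmetrization overhead'' has no argument behind it, because the intermediate constant-factor comparison for the online-output, non-monotone instance is precisely the theorem's technical core and remains unproved in your outline.
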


Both Theorems~\ref{thm:monotSubmod} and~\ref{thm:NmonotSubmod} just
consider the adaptivity gap. What about the computational question of
finding the best non-adaptive strategy? This is where the complexity of
the prefix-closed constraints come in. The problem of finding the best
non-adaptive strategy with respect to some prefix-closed probing
constraint can be reduced to the problem of maximizing a submodular
function with respect to the same  constraints.

\paragraph{XOS Functions.} We next consider more general classes of
functions. We conjecture that the adaptivity gap for all subadditive
functions is poly-logarithmic in the size of the ground set. Since we
know that any subadditive function can be approximated to within a
logarithmic factor by an XOS (a.k.a.\ \emph{max-of-sums}, or \emph{fractionally
  subadditive}) function~\cite{Dobzinski07}, and every XOS function is
subadditive, it suffices to focus on XOS functions. As a step towards
our conjecture, we show a nearly-tight logarithmic adaptivity gap for
monotone XOS functions of small ``\emph{width}'', which we explain below.

A monotone XOS function $f: 2^{[n]} \to \R_{\geq 0}$ is one that can
be written as the maximum of linear functions: i.e., there are vectors
$\C_i \in \R_{\geq 0}^n$ for some $i = 1 \ldots \W$ such that
\[ \textstyle{
f(S) := \max_i \big( \C_i^\intercal \chi_S \big) = \max_i
\big( \sum_{j \in S} \C_{i}(j) \big).  
}\] We define the \emph{width} of
(the representation of) an XOS function as $\W$, the number of linear
functions in this representation.  E.g., a width-$1$ XOS function is just
a linear function. In general, even representing submodular functions in
this XOS form requires an exponential width~\cite{BalcanHarvey-STOC11,BDFKNR-SODA12}. 

\begin{theorem}[XOS Functions]
  For any monotone XOS function $f$ of width $\W$, and any prefix-closed probing
  constraints, the adaptivity gap is $O(\log \W)$. Moreover, there are
  instances with $\W = \Theta(n)$ where the adaptivity gap is
  $\Omega\big(\frac{\log \W}{\log \log \W}\big)$.
\end{theorem}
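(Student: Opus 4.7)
The plan is to bound the adaptive optimum $\adap$ by $O(\log \W)$ times the non-adaptive optimum $\mathsf{na}$. Let $\mathcal{A}$ be the optimal adaptive decision tree, with leaves $\ell$ of probability $\pi_\ell$ and resulting active set $T_\ell$. For each leaf, pick a witness $i(\ell) \in [\W]$ with $f(T_\ell) = \C_{i(\ell)}^\intercal \chi_{T_\ell}$, and for $i \in [\W]$ define $V_i := \E_{\mathcal{A}}[\C_i^\intercal \chi_T]$. Because $\C_i^\intercal \chi_{(\cdot)}$ is linear --- and hence has adaptivity gap $1$, by a special case of Theorem~\ref{thm:monotSubmod} --- there is a non-adaptive strategy $\mathcal{N}_i$ whose expected $\C_i$-value is at least $V_i$, and since $f \ge \C_i^\intercal \chi_{(\cdot)}$ pointwise, this gives $\mathsf{na} \ge \max_i V_i$.

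The crux is then to prove $\adap \le O(\log \W) \cdot \max_i V_i$. Following the blueprint of Theorem~\ref{thm:monotSubmod}, I would simulate $\mathcal{A}$ by a non-adaptive strategy that samples a random root-to-leaf path of $\mathcal{A}$'s underlying sequence tree: this distribution is feasible because the probing constraint is prefix-closed, and it has the same element marginals $q_e := \Pr_{\mathcal{A}}[e \in T]$. Conditional on the sampled path, $\chi_T$ has independent Bernoulli coordinates, so each $\C_i^\intercal \chi_T$ is a sum of independent bounded nonnegative terms of mean $O(V_i)$. A truncation-plus-Chernoff bound, unioned over the $\W$ linear functions, should then yield $\E[\max_i \C_i^\intercal \chi_T] \le O(\log \W) \cdot \max_i V_i$, which combined with the previous paragraph gives $\adap \le O(\log \W) \cdot \mathsf{na}$.

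\textbf{Main obstacle.} The hard part is handling ``spiky'' coordinates, where some $\C_i(e)$ is comparable to $V_i$ itself and Chernoff is weak. I would split each $\C_i$ at threshold $V_i / \log \W$ into ``light'' and ``heavy'' parts; the light part concentrates sharply by Chernoff, while for the heavy part any element $e$ with $\C_i(e) \gtrsim V_i$ already yields a trivial non-adaptive strategy (probe just $e$) of value $\Omega(p_e \cdot V_i)$, so those contributions to $\adap$ are absorbed by $\mathsf{na}$ directly. A second subtlety is that the path-sampling reduction only gives the correct marginals unconditionally; carrying it through for each $\C_i$ simultaneously while preserving the $O(\log \W)$ loss will likely require the kind of careful coupling used in the submodular proof.

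\textbf{Lower bound.} For the matching $\Omega(\log \W / \log \log \W)$ bound, I would construct a rooted complete $d$-ary tree of depth $D$ with $d = D = \Theta(\log \W / \log \log \W)$ and $d^D = \W$. Each node is an element with activation probability $1/d$, and each of the $\W$ root-to-leaf paths is an XOS clause $\C_i$ rewarding $1$ per active element along path $i$. With a probing budget $\Theta(d D)$, an adaptive strategy descends the tree by advancing at each level to an active child (in expectation one per level), collecting value $\Omega(D)$ with constant probability. A non-adaptive strategy must commit its $\Theta(d D)$ probes before seeing activations, and a direct counting argument shows any fixed probe-set aligns with a full active path with probability $O(1/\W)$, giving value $O(1)$ and hence the claimed gap.
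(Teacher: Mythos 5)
The pivot of your argument --- the inequality $\adap \le O(\log \W)\cdot \max_i V_i$ with $V_i := \E[\C_i^\intercal \chi_T]$ --- is false, so the chain breaks at its key step. Take $\W=n$ elements $e_1,\dots,e_\W$, clause $\C_i$ supported on $e_i$ alone with $\C_i(e_i)=\W$, each $p_{e_i}=1/\W$, and a constraint allowing all elements to be probed. Every $V_i=1$, yet $\adap = \E[\max_i \C_i(A)] = \W\bigl(1-(1-1/\W)^\W\bigr) = \Theta(\W)$. (Here the adaptivity gap is of course $1$; the point is that $\max_i V_i$ is far too weak a lower bound on the non-adaptive value, and your patch for ``heavy'' coordinates --- probing a single element $e$, worth $p_e\C_i(e)\approx V_i$ --- absorbs only one spike, not the max over many.) The paper avoids this in two ways you would need: (i) it compares \adap leaf-by-leaf against $\mu^{\max}_\ell(P_\ell)$, the path-mean of the clause that is best \emph{for that leaf} (whose expectation over leaves can far exceed $\max_i V_i$), and (ii) it handles large elements ($\max_i \C_i(e)\ge \OPT/\lambda$) by reducing to the rank-one objective $g(R)=h\cdot\min\{|R\cap L|,1\}$ (``catch at least one large element'') and invoking the constant adaptivity gap of Theorem~\ref{thm:monotSubmod}; that is what captures the $\Theta(\W)$ value in the example above. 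A further problem, which you flag but do not resolve, is the concentration step itself: conditional on a sampled path the relevant mean is $\mu_i(P_\ell)$, not $O(V_i)$, and under the adaptive coupling the element probed at step $t$ depends on the history, so a Chernoff bound for independent Bernoullis on a fixed path does not apply. The paper first truncates the tree so that $\C_i(P_\ell)\le 2\,\OPT$ for all $i,\ell$ and removes large elements, and then applies Freedman's martingale inequality along the random walk down the tree, before taking the union bound over the $\W$ clauses.

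\textbf{Lower bound.} Your analysis of the non-adaptive side is also incorrect under a pure cardinality budget of $\Theta(dD)$ probes: the value of a probe set is $\max_i|A\cap P_i|$, not the indicator of hitting a fully active path, and partial alignments matter. For instance, probing $d$ disjoint root-leaf paths gives expected value $\Theta(\log d/\log\log d)$ (the max of $d$ independent $\mathrm{Bin}(d,1/d)$ variables), not $O(1)$, so with $d=D=\Theta(\log \W/\log\log \W)$ you only get a gap of $\Omega\bigl(\log \W/(\log\log \W)^2\bigr)$ --- exactly the weaker bound the paper proves for cardinality constraints in \S\ref{sec:xos-lower-bd}. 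To obtain the stated $\Omega(\log \W/\log\log \W)$, the paper uses a different prefix-closed constraint: all probed edges must have an endpoint on some single root-leaf path. This caps every non-adaptive strategy at $O(1)$ expected value (at most $1$ from off-path edges, and expected value $k\cdot(1/k)=1$ from the on-path edges), while the adaptive strategy still gets $\Omega(k)$. Either switch to that constraint or settle for the $(\log\log)^2$ loss; as written, your counting claim does not support ``value $O(1)$''.
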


In this case, we can also reduce the computation problem to
\emph{linear} maximization over the constraints.

\begin{theorem} 
  Suppose we are given a width-$\W$ monotone XOS function $f$ explicitly in the
  max-of-sums representation, and an oracle to maximize positive linear
  functions over some prefix-closed constraint. Then there exists an
  algorithm that runs in time $\poly(n,W)$ and outputs a
  non-adaptive strategy that has expected value at least an $\Omega(\log
  \W)$-fraction of the optimal adaptive strategy.
\end{theorem}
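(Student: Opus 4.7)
The plan is to invoke the given linear-maximization oracle once for each of the $\W$ components in the XOS representation of $f$, and then output whichever of the resulting feasible sequences achieves the largest expected linear value.

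Concretely, for each $i=1,\ldots,\W$, call the oracle on the positive linear function $e\mapsto p_e\,\C_i(e)$ over the prefix-closed constraint to obtain a feasible sequence $\sigma_i$ maximizing
\[
V_i \;:=\; \sum_{e\in\sigma_i} p_e\,\C_i(e).
\]
Since each element is active independently of all others, $V_i = \E\bigl[\C_i^\intercal \chi_{\mathrm{active}(\sigma_i)}\bigr]$ is precisely the best non-adaptive expected value attainable for the single linear component $\C_i$. The algorithm then outputs $\sigma_{i^\star}$ with $i^\star\in\argmax_i V_i$; this uses $\W$ oracle calls and $O(n\W)$ bookkeeping, so the total running time is $\poly(n,\W)$.

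For the approximation guarantee, the XOS inequality $f(S)\ge \C_{i^\star}^\intercal\chi_S$ gives
\[
\E\bigl[f(\mathrm{active}(\sigma_{i^\star}))\bigr] \;\ge\; \E\bigl[\C_{i^\star}^\intercal\chi_{\mathrm{active}(\sigma_{i^\star})}\bigr] \;=\; V_{i^\star} \;=\; \max_i V_i,
\]
so it suffices to prove the ``single-component'' inequality $\max_i V_i \ge \Omega(1/\log \W)\cdot\adval$, where $\adval$ denotes the value of the optimal adaptive strategy. This is precisely the core estimate underlying the $O(\log \W)$ adaptivity-gap theorem for XOS functions stated just above: the witness non-adaptive strategy produced in that proof should be --- or should be replaceable by --- the best non-adaptive strategy for some single linear component $\C_i$, up to constant factors.

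The step I expect to be the main obstacle is extracting this single-component form from the adaptivity-gap argument. The naive bound $\adval \le \sum_i V_i \le \W\cdot \max_i V_i$ loses a factor $\W$ rather than $\log \W$, so the proof must exploit additional structure. A natural route is to bucket the linear components by weight scale, lose only one $\log\W$ factor via a union bound over $\Theta(\log\W)$ scales, and then observe that within a single scale the XOS maximum is dominated (up to constants) by a single component's linear value. If the adaptivity-gap proof instead only yields a randomized mixture over the family $\{\sigma_i\}_{i\in[\W]}$, we derandomize by picking the best atom, which is precisely $\sigma_{i^\star}$; if the witness strategy lies outside this family, a short interchange/convexity argument would be required to route it back to our algorithm's output.
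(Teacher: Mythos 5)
Your reduction to the ``single-component'' estimate $\max_i V_i \ge \Omega(1/\log \W)\cdot \OPT$ is where the proposal breaks: that estimate is false, and the bucketing you sketch (over linear components by weight scale) does not repair it. Concretely, take $n$ elements each with $p_e = 1/n$, width $\W = n$ where $\C_i$ puts weight $1$ on element $i$ and $0$ elsewhere, so $f(S) = \mathbf{1}(S\ne\emptyset)$, and let the probing constraint allow probing all of $X$. Then the optimal (even non-adaptive) value is $1-(1-1/n)^n \ge 1-1/e$, but $V_i = 1/n$ for every $i$, a factor $\Theta(n/\log n)$ below the target $\Omega(\OPT/\log\W)$; worse, a legitimate oracle may return just the singleton $\{e_i\}$, so your output's actual expected $f$-value is also $1/n$. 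The underlying issue is that when the adaptive value comes from rare, individually valuable (``large'') elements, the right non-adaptive surrogate is the \emph{probability of catching at least one} heavy element, not any single component's expected linear value; objectives of the form $p_e\C_i(e)$ cannot express this, so no selection rule among your $\W$ sequences can recover the guarantee.

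This is exactly the branch the paper adds and your proposal omits. The paper splits $\OPT$ into the contribution of large elements ($\max_i \C_i(e) \ge \OPT/\lambda$ with $\lambda = \Theta(\log\W)$) and small ones. In the small regime your $\W$ oracle calls on $e\mapsto p_e\C_i(e)$ are precisely what is needed: Remark~\ref{rem:smallcoeff} guarantees a feasible path and a \emph{single} component with mean $\Omega(\OPT)$, so that half of your plan coincides with the paper's. In the large regime, the paper makes $O(\log n)$ additional oracle calls on threshold objectives $\mathbf{b}_j(e) = p_e$ if $\max_i \C_i(e) \ge 2^j m/\lambda$ and $0$ otherwise (with $m = \max_e p_e\max_i\C_i(e)$, guessing the unknown scale of $\OPT \in [m, nm]$), i.e., it buckets \emph{elements} by weight and maximizes the total activation probability of heavy elements; Lemma~\ref{lem:adaplargecoeff}, proved by applying Theorem~\ref{thm:monotSubmod} to the rank-$1$ matroid objective $h\cdot\min\{|R\cap L|,1\}$, then certifies that one of these sets $T_j$ attains value $\Omega(\OPT/\log\W)$, and the final algorithm selects the best among $\{S_1,\dots,S_\W, T_0,\dots,T_{1+\log n}\}$ by comparing the surrogate values $v(\cdot)$ rather than the $V_i$'s alone. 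Without this large-element branch the stated theorem fails, as the example above shows.
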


\subsection{Our Techniques}
\label{sec:techniques}

Before talking about our techniques, a word about previous approaches to
bounding the adaptivity gap. Several works, starting with the work of
Dean et al.~\cite{DGV04} have used geometric
``relaxations'' (e.g., a linear program for linear
functions~\cite{DGV04}, or the multilinear extension for submodular
settings~\cite{ANS,ASW14}) to get an estimate of the value achieved by
the optimal adaptive strategy. Then one  tries to find a non-adaptive
strategy whose expected value is not much less than this relaxation.
This is particularly successful when the probing constraints are
amenable to being captured by linear programs---e.g., matroid or
knapsack constraints. Dealing with general constraints (which include
orienteering constraints, where no good linear relaxations are
known) means we cannot use this approach. 

The other approach is to argue about the optimal decision-tree directly.
An induction on the tree was used, e.g., by Chen et al.~\cite{CIKMR09}
and Adamczyk~\cite{A11} to study stochastic matchings. A different
approach is to use concentration bounds like Freedman's inequality to
show that for most paths down the tree, the function on the path behaves
like the path-mean---this was useful in \cite{GKNR12-soda}, and also in
our previous work on adaptivity gaps of matroid rank
functions~\cite{GNS16}. However, this approach seems best suited to
linear functions, and  loses logarithmic factors due to the need for
union bounds.

Given that we  prove  a general result for any submodular function,
 how do we show a good
non-adaptive strategy? Our approach is to take a random path down the
tree (the randomness coming from the element activation probabilities)
and to show the expected value of this path, when viewed as a
non-adaptive strategy, to be good. To prove this, surprisingly, we use
induction. It is surprising because the natural induction down the tree
does not seem to work. So we perform a non-standard inductive argument,
where we consider the all-\no path (which we call the \emph{stem}), show
that a non-adaptive strategy would get value comparable to the decision
tree on the stem, and then induct on the subtrees hanging off this
stem. The proof for monotone functions, though basic, is
subtle---requiring us to change representations and view things
``right''. This appears in \S\ref{sec:monotone}.

For non-monotone submodular functions, the matter is complicated by the
fact that we cannot pick elements in an ``online'' fashion when going
down the tree---greedy-like strategies are bad for non-monotone
functions. Hence we pick elements only with some probability, and show
this gives us a near-optimal solution. The argument is complicated by
the fact that having picked some elements $X$, the marginal-value
function $f_X(S) := f(X\cup S) - f(X)$ may no longer be non-negative.

Finally, for monotone XOS functions of small width, we use the approach based on
Freedman's concentration inequality to show that a simple algorithm that
either picks the set optimizing one of the linear functions, or a single
element, is within an $O(\log \W)$ factor of the optimum. We then give a
lower bound example showing an (almost-)logarithmic factor is necessary,
at least for $\W = O(n)$.

\subsection{Related Work}
\label{sec:related}

The adaptivity gap of stochastic packing problems has seen much
interest: e.g., for knapsack~\cite{DGV04,BGK11,M14}, packing integer
programs~\cite{DGV05,CIKMR09,BGLMNR12}, budgeted multi-armed
bandits~\cite{GM07,GKMR11,LiY13,M14} and
orienteering~\cite{GuhaM09,GKNR12-soda,BN14}. All except the orienteering
results rely on having relaxations that capture the constraints of the
problem via linear constraints.

For stochastic monotone submodular functions where the probing
constraints are given by matroids, Asadpour et al.~\cite{AN16} bounded
the adaptivity gap by $\frac{e}{e-1}$; Hellerstein et al.~\cite{HKL15}
bound it by $\frac1\tau$, where $\tau$ is the smallest probability of
some set being materialized. (See also~\cite{LiuPRY08,DHK14}.)

The work of Chen et al.~\cite{CIKMR09} (see
also~\cite{A11,BGLMNR12,BCNSX15,AGM15}) sought to maximize the size of a
matching subject to $b$-matching constraints; this was motivated by
applications to online dating and kidney exchange. More generally, see,
e.g.~\cite{RSU05,AR12}, for pointers to other work on kidney exchange
problems. The work of~\cite{GN13} abstracted out the general problem of
maximizing a function (in their case, the rank function of the
intersection of matroids or knapsacks) subject to probing constraints
(again, intersection of matroids and knapsacks). This was
improved and generalized by Adamczyk, et al.~\cite{ASW14} to submodular
objectives. All these results use LP relaxations, or non-linear
geometric relaxations for the submodular settings.

The previous work of the authors~\cite{GNS16} gave results for the case
where $f$ was the rank function of matroids (or their intersections).
That work bounded the adaptivity gap by logarithmic factors, and gave
better results for special cases like uniform and partition matroids. 
This work both improves the quantitative bounds (down to small
constants), generalizes it to all submodular functions with the hope of
getting to all subadditive functions, and arguably also makes the proof
simpler.

\section{Preliminaries and Notation}
\label{sec:prelims}

We denote the ground set by $X$, with $n = |X|$. Each element $e \in X$
has an associated probability $p_e$. Given a subset $S \sse X$ and
vector $\p = (p_1, p_2, \ldots, p_n)$, let $S(\p)$ denote the distribution
over subsets of $S$ obtained by picking each element $e \in S$
independently with probability $p_e$. (Specifying a single number $p \in
[0,1]$ in $S(p)$ indicates each element is chosen with probability $p_e
= p$.)

A function $f: 2^X \to \R$ is 
\begin{itemize}[topsep=0pt,itemsep=0.1em]
\item \emph{monotone} if $f(S) \leq f(T)$ for all $S \sse T$.

\item \emph{linear} if there exist $a_i \in \R$ for each $i \in X$ such
  that $f(S) = \sum_{i \in S} a_i$.

\item \emph{submodular} if $f(A \cup B) + f(A \cap B) \leq f(A) + f(B)$
  for all $A, B \sse X$. We will normally assume that $f$ is
  non-negative and $f(\emptyset) = 0$.

\item \emph{subadditive} if $f(A \cup B) \leq f(A) + f(B)$. A
  non-negative submodular function is clearly subadditive.

\item \emph{fractionally subadditive} (or \emph{XOS}) if $f(T) \leq
  \sum_i \alpha_i f(S_i)$ for all $\alpha_i \geq 0$ and $\chi_T = \sum_i
  \alpha_i \chi_{S_i}$. \footnote{Our definitions of fractionally
    subadditive/XOS differ slightly from those in the literature, since
    we allow non-monotonicity in our functions. See \S\ref{sec:non-monotone-xos} for
    a discussion.}
 
  An alternate characterization: a function is XOS if there exist linear
  functions $\C_1, \C_2, \ldots, \C_w: 2^X \to \R$ such that $f(X) =
  \max_j \{ \C_j(X) \}$. The \emph{width} of an XOS function is the
  smallest number $\W$ such that $f$ can be written as the maximum over
  $\W$ linear functions.
\end{itemize}

All objective functions $f$ that we deal with are non-negative with $f(\emptyset)=0$. 

Given any function $f: 2^X \to \R$, define $\fmax(S) := \max_{T \sse
  S} f(T)$ to be the maximum value subset contained within $S$. The
function $f$ is monotone if and only if $\fmax = f$. In general, $\fmax$
may be difficult to compute given access to $f$.
However, Feige et al.~\cite{FMV-SICOMP11} show that for submodular functions 
\begin{gather}
\textstyle \frac14 \fmax(S) \leq \E_{R \sim S(\frac12)}[f(R)] \leq
\fmax(S). \label{eq:fmv}
\end{gather}

Also, for a subset $S$, define the ``contracted'' function 
$f_S(T) := f(S \cup T) - f(S)$. Note that if $f$ is non-monotone, then $f_S$ may be
negative-valued even if $f$ is not.

\paragraph{Adaptive Strategies}
An adaptive strategy tree $\T$ is a a binary rooted tree where every
internal node $v$ represents some element $e \in X$ (denoted by $\elt(v)
= e$), and has two outgoing arcs---the \yes arc indicating the node
to go to if the element $e = \elt(v)$ is active (which happens with
probability $p_e$) when probed, and the \no arc indicating the node
to go to if $e$ is not active (which happens with the remaining
probability $q_e = 1-p_e$). No element can be represented by two
different nodes on any root-leaf path. Moreover, any root-leaf path in
$T$ should be feasible according to the  constraints. Hence, each
leaf $\ell$ in the tree $\T$ is associated with the root-path $P_\ell$:
the elements probed on this path are denoted by $\elt(P_\ell)$. Let
$A_\ell$ denote the active elements on this path $P_\ell$---i.e., the
elements represented by the nodes on $P_\ell$ for which we took the
\yes arc.

The tree $\T$ naturally gives us a probability distribution $\pi_\T$
over its leaves: start at the root, at each node $v$, follow the \yes
branch with probability $p_{\elt(v)}$ and the ``no'' branch otherwise,
to end at a leaf. 

Given a submodular function $f$ and a tree $\T$, the associated adaptive
strategy is to probe elements until we reach a leaf $\ell$, and then to
pick the max-value subset of the active elements on this path $P_\ell$. 
Let $\adap(\T, f)$ denote the expected value
obtained this way; it can be written compactly as
\begin{gather}
  \adap(\T, f) := \E_{\ell \gets \pi_\T} [\fmax(A_\ell)]. \label{eq:adap}
\end{gather}

\begin{definition}[stem of $\T$]
  \label{defn:stem} For any adaptive strategy tree $\T$ the \emph{stem}
  represents the all-\no path in $\T$ starting at the root, i.e., when
  all the probed elements turn out inactive.
\end{definition}

\begin{definition}[\depth of $\T$] 
  \label{defn:depth} The \emph{\depth} of a strategy tree $\T$ is the
  maximum number of \emph{active} nodes that \adap sees along a root-leaf path
  of $\T$.
\end{definition}

Note that this notion of \depth is not the same as that of depth used
for trees: it measures the number of \yes-arcs on the path from the root
to the leaf, rather than just the number of arcs seen on the path. This
definition is inspired by the induction we will do in the submodular
sections.

We can also define the \emph{natural non-adaptive} algorithm given the
tree $\T$: just pick a leaf $\ell \gets \pi_\T$ from the distribution
given by $\T$, probe all elements on that path, and choose the max-value
subset of the active elements. We denote the
expected value by $\alg(\T, f)$:
\begin{gather}
  \alg(\T, f) := \E_{\ell \gets \pi_T} [ \E_{R \sim X(p)} [ \fmax(R \cap
  \elt(P_\ell)) ] ]. \label{eq:alg}
\end{gather}

\newcommand{\agleft}{\big}
\newcommand{\agright}{\big}
\newcommand{\ts}{\textstyle}

\section{Monotone Non-Negative Submodular Functions}
\label{sec:monotone}

We now prove Theorem~\ref{thm:monotSubmod}, and bound the adaptivity gap
for \emph{monotone} submodular functions $f$ over any prefix-closed set
of constraints. The idea is a natural one in retrospect: we take an
adaptive tree $\T$, and show that the natural non-adaptive strategy
(given by choosing a random root-leaf path down the tree, and probing
the elements on that path) is within a factor of $3$ of the adaptive
tree. The proof is non-trivial, though. One strategy is to induct on the
two children of the root (which, say, probes element $e$), but note that
the adaptive and non-adaptive algorithms recurse having seen different
sets of active elements.\footnote{Adaptive sees $e$ as active when
  it takes the \yes branch (with probability $p_e$), and nothing as
  active when
  taking the \no branch. Non-adaptive recurses on the \yes branch with
  the same probability $p_e$ because it picks a random path down the
  tree---but it then also probes $e$. So when it takes the \yes branch,
  it has either seen $e$ as active (with probability $p_e$) or not (with
  probability $1-p_e$). Hence the set of active elements on both sides
  are quite different.} This forced previous results to proceed along
different lines, using massive union bounds over the paths in the
decision tree, and hence losing logarithmic factors. They were also
restricted to matroid rank functions, instead of all submodular
functions.

A crucial insight in our proof is to focus on the \emph{stem} of the
tree (the all-\no path off the root, see Definition~\ref{defn:stem}),
and induct on the subtrees hanging off this stem. Again we have issues
of adaptive and non-adaptive recursing with different active elements,
but we control this by giving the adaptive strategy some elements for
free, and contracting some elements in the non-adaptive strategy without
collecting value for them. The proof for non-monotone functions in
\S\ref{sec:non-mono} will be even more tricky, and will build on ideas
from this monotone case. Formally, the main technical result is the
following:
\begin{theorem} 
  \label{thm:monadapvsalg} 
  For any adaptive strategy tree $\T$, and any monotone non-negative
  submodular function $f: 2^X \to \R_{\geq 0}$ with $f(\emptyset) = 0$,
  \[ \alg(\T, f) \geq \nicefrac13~\adap(\T, f). \]
\end{theorem}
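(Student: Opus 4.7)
I plan to prove this by induction on the \depth of $\T$. The base case $\depth=0$ is immediate: then $\T$ is just its stem, every leaf is reached with zero active elements, and $\adap(\T, f) = 0$.

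For the inductive step, I would decompose $\T$ along its stem $e_1, \ldots, e_k$, letting $\T_i$ denote the yes-subtree rooted at the $i$-th stem node; each $\T_i$ has strictly smaller depth than $\T$. Writing $\pi(e_i) = p_{e_i}\prod_{j<i}q_{e_j}$ for the probability of entering $\T_i$, $\pi_{\text{leaf}} = \prod_j q_{e_j}$ for reaching the stem leaf, and $S_i = \{e_1, \ldots, e_i\}$ for the stem prefix, the standard decompositions give
\[
\adap(\T, f) \;=\; \sum_i \pi(e_i)\,\bigl[\,f(\{e_i\}) + \adap(\T_i, f_{\{e_i\}})\,\bigr],
\]
\[
\alg(\T, f) \;=\; \pi_{\text{leaf}}\,\E_R[f(R\cap S_k)] \;+\; \sum_i \pi(e_i)\,\E_R\E_{\ell\sim\pi_{\T_i}}\bigl[f(R\cap(S_i\cup\elt(P_\ell)))\bigr].
\]

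The \emph{change of representation} step is to split the random set inside $\alg$'s subtree contribution as $Y\cup Z$, where $Y := R\cap S_i$ and $Z := R\cap\elt(P_\ell)$ are independent because the underlying element sets are disjoint. Using $f(Y\cup Z) = f(Y)+f_Y(Z)$, the $i$-th subtree contribution becomes $\pi(e_i)\,\E_Y\bigl[f(Y) + \alg(\T_i, f_Y)\bigr]$. Crucially, monotonicity of $f$ forces each realization $f_Y$ to be again monotone non-negative submodular with $f_Y(\emptyset)=0$, so the inductive hypothesis applies pointwise, giving $\alg(\T_i, f_Y) \geq \tfrac13\adap(\T_i, f_Y) = \tfrac13\bigl(\E_\ell[f(Y\cup A_\ell)]-f(Y)\bigr)$. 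The key submodular lemma is a lower bound, for any $T$ disjoint from $S_i$,
\[
\E_Y[f(Y\cup T)] \;\geq\; f(\{e_i\}\cup T) - q_{e_i}\,f(\{e_i\}),
\]
which quantifies the loss from comparing random stem-actives $Y$ to the specific element $e_i$ that adap has committed to. I would prove it by splitting on $e_i\in Y$: monotonicity handles that case, while submodularity in the form $f(Y\cup T)+f(\{e_i\})\geq f(Y\cup T\cup\{e_i\})\geq f(\{e_i\}\cup T)$ handles the complement.

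Plugging this lemma into the induction and simplifying reduces the target $3\alg(\T, f)\geq\adap(\T, f)$ to the single inequality
\[
3\pi_{\text{leaf}}\,\E_R[f(R\cap S_k)] + 2\sum_i \pi(e_i)\,\E_R[f(R\cap S_i)] \;\geq\; \sum_i \pi(e_i)\,q_{e_i}\,f(\{e_i\}),
\]
and this is where the hard part lies. The obstacle is that a single left-hand side must simultaneously cover $k$ loss terms on the right. My plan is a charging argument using two ingredients: the telescoping identity $\pi(e_i)\,q_{e_i} = p_{e_i}\bigl(\pi_{\text{leaf}}+\sum_{j>i}\pi(e_j)\bigr)$, which rewrites each loss in terms of mass from later stem events, and the monotonicity bound $\E_R[f(R\cap S_j)]\geq p_{e_i}f(\{e_i\})$ for any $j\geq i$ (since $e_i\in R$ w.p.\ $p_{e_i}$ and then $f(R\cap S_j)\geq f(\{e_i\})$). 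When $p_{e_i}\geq\tfrac13$, the local term $2\pi(e_i)\E_R[f(R\cap S_i)]$ by itself covers $\pi(e_i)q_{e_i}f(\{e_i\})$ because $2p_{e_i}\geq q_{e_i}$; for $p_{e_i}<\tfrac13$, the deficit is absorbed by appropriate fractions of the later-stem and stem-leaf terms, with the multipliers $2$ and $3$ tuned precisely so the per-column allocations stay within budget. The constant $\tfrac13$ emerges exactly from the crossover $2p_{e_i}\geq q_{e_i}\Leftrightarrow p_{e_i}\geq\tfrac13$.
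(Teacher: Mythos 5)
Your inductive skeleton is sound and runs parallel to the paper's own proof: decompose $\T$ along the stem, contract what the non-adaptive run has already collected, apply the induction hypothesis to the subtrees $\T_i$ (which indeed have strictly smaller \depth, and $f_Y$ is indeed monotone, non-negative, submodular with $f_Y(\emptyset)=0$), and reduce everything to a single inequality about the stem. Your ``key submodular lemma'' $\E_Y[f(Y\cup T)]\ge f(\{e_i\}\cup T)-q_{e_i}f(\{e_i\})$ is correct, and it is a legitimate alternative to the paper's device of handing \adap the set $R$ for free and contracting $R\cup\{e_I\}$ on \emph{both} sides so that the same function appears in both recursions. The reduced stem inequality you arrive at is also true---it is implied (with room to spare, since $q_{e_i}\le 1$) by the paper's Lemma~\ref{lemma:stemmass}.

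The genuine gap is the charging argument you propose for that reduced inequality, which is exactly the step you defer. The only lower bound you allow yourself on a column $\E_R[f(R\cap S_j)]$ is $p_{e_i}f(\{e_i\})$ for a single $i\le j$ (or a convex combination of such bounds), so the total right-hand mass your scheme can ever cover is at most $\bigl(2\sum_j\pi(e_j)+3\pi_{\text{leaf}}\bigr)\cdot\max_i p_{e_i}f(\{e_i\})\le 3\max_i p_{e_i}f(\{e_i\})$. This is hopeless when the stem has many low-probability elements: take $f(S)=\min\{|S|,1\}$, all $p_{e_i}=\varepsilon$, and stem length $k$ with $\varepsilon k\to\infty$. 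Then the right-hand side $\sum_i\pi(e_i)q_{e_i}f(\{e_i\})\to 1-\varepsilon$, while your certifiable left-hand side is $O(\varepsilon)$; worse, in this instance the reduced inequality is itself asymptotically tight (the true left-hand side tends to $\frac{2}{2-\varepsilon}$, coming from $\E[f(R\cap S_i)]=1-(1-\varepsilon)^i$), so there is no constant-factor slack that a lossy charging could hide in. Covering the right-hand side forces you to credit column $i$ with the probability that \emph{some} earlier stem element is active, i.e.\ with $\E_R[\max_{e_j\in R}f(e_j)]$ rather than $\max_{j\le i}p_{e_j}f(\{e_j\})$, and controlling that quantity is precisely the content of the paper's Lemma~\ref{lemma:stemmass} and Claim~\ref{claim:cuteineq} (the inequality $\sum_i a_i(\prod_{j<i}b_j)^2\ge\nicefrac12\sum_i a_i\prod_{j<i}b_j$, proved via the threshold/layered decomposition and conditioning on elements outside $W_x$). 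Your $p_{e_i}\ge\nicefrac13$ case split handles high-probability elements, but the low-probability regime---which is where the constant actually comes from---is not addressed by the two ingredients you list. If you replace the charging step by (a proof of) the stem lemma, the rest of your argument does go through and yields the factor $3$.
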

Theorem~\ref{thm:monotSubmod} follows by the observation that each
root-leaf path in $\T$ satisfies the prefix-closed constraints, which
gives us a feasible non-adaptive strategy.
Some comments on the proof: because the function $f$ is monotone, $\fmax
= f$. Plugging this into~(\ref{eq:adap}) and~(\ref{eq:alg}), we want to
show that
\begin{gather}
\E_{\ell \gets \pi_T} [ \E_{R \sim X(p)} [ f(R \cap
  \elt(P_\ell)) ] ]
~~\geq~~ \nicefrac13\,   \E_{\ell \gets \pi_\T} [f(A_\ell)]. \label{eq:simpler-mono}
\end{gather}
Since both expressions take expectations over the random path, the proof
proceeds by induction on the \depth of the tree. (Recall the
 definition of \depth in Definition~\ref{defn:depth}.) We
argue that for the stem starting at the root, \alg gets a value close to
\adap in expectation (Lemma~\ref{lemma:stemmass}). However, to induct on the subtree that the
algorithms leave the stem on, the problem is that the two algorithms may
have picked up different active elements on the stem, and hence the
``contracted'' functions may look very different. The idea now is to
give \adap the elements picked by \alg for ``free'' and disallow \alg
(just for the analysis) to pick elements picked by \adap after exiting
the stem. Now both the algorithms work after contracting the same set of
elements in $f$, and we are able to proceed with the induction.

\subsection{Proof of Theorem~\ref{thm:monadapvsalg}}

\begin{proof}
  We prove by induction on the \depth of the adaptive strategy tree
  $\T$. For the base case of \depth~$0$, $\T$ contains exactly one node,
  and there are no internal nodes representing element. Hence both
  $\alg$ and $\adap$ get zero value, so the theorem is vacuously true.

  To prove the induction step, recall that the stem is the path in $\T$
  obtained by starting at the root node and following the \texttt{no}
  arcs until we reach a leaf. (See Figure~\ref{fig:stem}.)  Let $v_1,v_2, \ldots, v_\ell$ denote the nodes along the
  stem of $\T$ with $v_1$ being the root and $v_\ell$ being a leaf; let
  $e_i = \elt(v_i)$.  For $i\geq 1$, let $\T_i$ denote the subtree
  hanging off the \texttt{yes} arc leaving $v_i$.  The probability that
  a path following the probability distribution $\pi_\T$ enters $\T_i$
  is $p_i ~\prod_{j<i}q_j$, where $p_i = 1-q_i$ denotes the probability
  that the $i^{th}$ element is active. 
\tikzstyle{point}=[circle, draw, fill=black!30, inner sep=0pt, minimum width=2pt]

\tikzstyle{block}=[draw opacity=0.7,line width=1.4cm]
\tikzstyle{graphnode}=[circle, draw, fill=black!15, inner sep=0pt, minimum width=12pt]

\tikzstyle{input}=[rectangle, draw, fill=black!75,inner sep=3pt, inner ysep=3pt, minimum width=4pt]
\tikzstyle{unmatched}=[graphnode,fill=black!0]
\tikzstyle{shaded}=[graphnode,fill=black!20]
\tikzstyle{matched}=[graphnode,fill=black!100]  	
\tikzstyle{matching} = [ultra thick]
\tikzset{
    >=stealth',
    pil/.style={
           ->,
           thick,
           shorten <=2pt,
           shorten >=2pt,}
}
\tikzset{->-/.style={decoration={
  markings,
  mark=at position .5 with {\arrow{>}}},postaction={decorate}}}

\begin{figure}[ht]
\begin{center}
\begin{tikzpicture}[thin,scale=1]

	\foreach \y in {0,1,2,3}{
	\draw [line width=1mm] (-\y,-\y) -- (-\y-1,-\y-1);
	\draw [thick] (-\y,-\y) -- (-\y+1,-\y-1);
	}

	\draw [->-,thick,color=blue] (0,0) to [out=200,in=90] (-1,-1);
	\draw [->-,thick,color=blue] (-1,-1) to [out=290,in=30] (-2,-2);
	\draw [->-,thick,color=blue] (-2,-2) to [out=200,in=90] (-3,-3);
	\draw [->-,thick,color=blue] (-3,-3) to [out=240,in=90] (-2,-4);
	
	\foreach \y in {0,1,2,3}{
	\node at (-\y,-\y) [graphnode]{};}

	\foreach \y in {1,2,3,4}{
	\draw [thick,fill=black!15] (2-\y,-\y) -- (2-\y-0.5, -\y-1) -- (2-\y+0.5, -\y-1)-- (2-\y,-\y);
	\node at (2-\y,-\y-1)[label=above:$\T_{\y}$] {};}
	
	\node at (2.5,-1.5)[label=above:\yes] {};
	\node at (-3,-1.5)[label=above:\no] {};

\end{tikzpicture}
\end{center}
\caption{Adaptive strategy tree $\T$. The thick line shows the all-\no path. The arrows show the path taken by \adap. In this example $i=4$ and $S_i = \{e_1,e_2,e_3,e_4 \}$.}
\label{fig:stem}
\end{figure}

  Let $S_i = \{e_1, e_2, \ldots, e_i\}$ be the first $i$ elements probed
  on the stem, and $R_i \sim S_i(\p)$ be a random subset of $S_i$ that
  contains each element $e$ of $S_i$ independently w.p.\ $p_e$.  We can
  now rewrite \adap and \alg in a form more convenient for
  induction. Here we recall the definition of a marginal with respect to
  subset $Y$: $f_Y(S) := f(Y
  \cup S) - f(Y)$. Note that the leaf $v_\ell$ has no associated
  element; to avoid special cases we define a dummy element $e_\ell$
  with $f(\{e_\ell\}) = 0$ and $f_{\{e_{\ell}\}} = f$.

  \begin{claim} \label{claim:montboundalgs} Let $I$ be the r.v.\
    denoting the index of the node at which a random walk according to
    $\pi_\T$ leaves the stem. (If $I = \ell$ then the walk does not
    leave the stem, and $\T_\ell$ is a \depth-zero tree.) Then,
    \begin{align}
      \adap(\T, f)  &= \E_I \agleft[ f(e_I) + \adap(\T_I, f_{\{e_I\}})
      \agright] \label{eq:monsub1} \\
      &\leq  \E_{I,R \sim S_I(\p)} \agleft[ f(e_I) + f(R) +
        \adap(\T_I, f_{R \cup e_I}) \agright] \label{eq:monsub1a}\\
      \alg_H(\T) &= \E_{I, R \sim S_I(\p)}\agleft[ f(R) + \alg(\T_I, f_{R})\agright]
      \label{eq:monsub2}\\
      &\geq \E_{I,R \sim S_I(\p)}\agleft[ f(R) + \alg(\T_I, f_{ R \cup
          e_I})\agright] \label{eq:monsub2a}
    \end{align}
  \end{claim}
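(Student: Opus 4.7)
My plan is to prove the four lines by unfolding $\adap$ and $\alg$ along the stem-plus-subtrees decomposition of $\T$, then using monotonicity and submodularity to relate the contracted functions that appear in the recursion. Throughout, I will let $I$ be the (random) index at which a $\pi_\T$-walk leaves the stem, let $A'$ denote the active set observed along the continuation in $\T_I$, and let $R \sim S_I(\p)$ denote the independent Bernoulli activations of the stem elements that the non-adaptive algorithm probes. The boundary case $I = \ell$ is handled uniformly by the dummy-element convention $f(\{e_\ell\}) = 0$ and $f_{\{e_\ell\}} = f$.

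For the adaptive side, equation (\ref{eq:monsub1}) is essentially a definitional identity: after leaving the stem at $v_I$, $\adap$ has observed $e_I$ as active and then runs on $\T_I$, so $A_\ell = \{e_I\} \cup A'$ and $f(A_\ell) = f(e_I) + f_{\{e_I\}}(A')$; taking expectations over $I$ and the walk in $\T_I$ yields (\ref{eq:monsub1}). To upgrade this to the inequality (\ref{eq:monsub1a}), I would bound the integrand pointwise: monotonicity gives $f(\{e_I\} \cup A') \leq f(R \cup e_I \cup A')$, so $f_{\{e_I\}}(A') \leq f_{R \cup e_I}(A') + \bigl(f(R \cup e_I) - f(e_I)\bigr)$, and subadditivity (a consequence of non-negative submodularity) gives $f(R \cup e_I) - f(e_I) \leq f(R)$. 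Since $R$ is independent of $(I, A')$, taking expectations produces (\ref{eq:monsub1a}) without changing the left-hand side.

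For the non-adaptive side, (\ref{eq:monsub2}) comes from splitting the path $P_\ell$ into its stem prefix $S_I$ and its continuation inside $\T_I$. Using coordinatewise independence of $R \sim X(\p)$, I would rewrite $R \cap \elt(P_\ell)$ as $(R \cap S_I) \cup (R \cap \elt(P'_\ell))$ with the two parts independent, then expand $f$ of this union by the marginal identity. Conditioning on $I$ and on $R \cap S_I$, the inner expectation is exactly $\alg(\T_I, f_{R \cap S_I})$, which yields (\ref{eq:monsub2}). Finally, (\ref{eq:monsub2a}) follows because submodularity applied to $R \subseteq R \cup e_I$ gives the pointwise dominance $f_R(T) \geq f_{R \cup e_I}(T)$ for every $T$; this pointwise dominance is inherited by $f_R^{\max}$ vs.\ $f_{R \cup e_I}^{\max}$ and therefore by the expectation that defines $\alg(\T_I, \cdot)$.

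The main care needed is to verify that the contracted functions $f_{\{e_I\}}, f_R, f_{R \cup e_I}$ remain non-negative, monotone, and zero on the empty set, so that the recursive invocations of $\adap$ and $\alg$ on $\T_I$ remain well-defined and the identity $\fmax = f$ continues to hold on the subtrees. All three properties follow immediately from the definition of contraction together with monotonicity of $f$. Beyond this verification, the proof is essentially bookkeeping on the stem decomposition, and the real conceptual content lies in the pointwise inequalities used in (\ref{eq:monsub1a}) and (\ref{eq:monsub2a}), which are precisely what allow the subsequent inductive comparison of $\adap$ and $\alg$ on $\T_I$ using the same contracted function.
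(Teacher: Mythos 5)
Your proof is correct and takes essentially the same route as the paper's: \eqref{eq:monsub1} and \eqref{eq:monsub2} by unfolding the definitions along the stem decomposition, \eqref{eq:monsub1a} by handing the adaptive strategy the elements of $R$ ``for free,'' and \eqref{eq:monsub2a} by the decreasing-marginals property of submodularity. The only (welcome) difference is that you make explicit that the free-elements step uses subadditivity of the non-negative submodular $f$ in addition to monotonicity, a detail the paper's one-line attribution to monotonicity glosses over.
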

  \begin{proof}
    Equation~(\ref{eq:monsub1}) follows from the definition of \adap;
    \eqref{eq:monsub1a} follows from the monotonicity of $f$. (We are
    giving the adaptive strategy elements in $R$ ``for free''.)
    Equation~(\ref{eq:monsub2}) follows from the definition of \alg, and
    (\ref{eq:monsub2a}) uses the consequence of submodularity that
    marginals can only decrease for larger sets.
  \end{proof}

  Observe the expressions in~(\ref{eq:monsub1a}) and~(\ref{eq:monsub2a})
  are ideally suited to induction. Indeed, since the function $f_{R \cup
    e_I}$ also satisfies the assumptions of
  Theorem~\ref{thm:monadapvsalg}, and the height of $\T_i$ is smaller
  than that of $\T$, we use induction hypothesis on $\T_i$ with monotone
  non-negative submodular function $f_{R\cup E_I}$ to get
  \begin{gather*}
   \E_{I,R \sim S_I(\p)}\agleft[ \alg(\T_I, f_{ R \cup e_I})\agright] \geq
   \nicefrac13 \,
   \E_{I,R \sim S_I(\p)}\agleft[ \adap(\T_I, f_{ R \cup e_I})\agright].
  \end{gather*}
  Finally, we use the following Lemma~\ref{lemma:stemmass} to show that
  \begin{gather*}
   \E_{I,R \sim S_I(\p)}\agleft[ f(R)  \agright] \geq
   \nicefrac13 \,
   \E_{I,R \sim S_I(\p)}\agleft[ f(R) + f(e_I) \agright].
  \end{gather*}
  Substituting these two into~(\ref{eq:monsub1a})
  and~(\ref{eq:monsub2a}) finishes the induction step.
\end{proof}

\begin{lemma} 
  \label{lemma:stemmass} 
  Let $I$ be the r.v.\ denoting the index of the node at which a random
  walk according to $\pi_\T$ leaves the stem. (If $I = \ell$ then the
  walk does not leave the stem.)
  Then,
  \begin{gather*}
   \E_{I,R \sim S_I(\p)}\agleft[ f(R)  \agright] \geq
   \nicefrac12 \,
   \E_{I}\agleft[ f(e_I) \agright].
  \end{gather*}
\end{lemma}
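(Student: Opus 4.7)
The plan is to reduce the inequality to a one-parameter family of $\{0,1\}$-valued level-set claims via a layer-cake decomposition, and then prove each such claim by a measure-preserving swap between the walk's and the non-adaptive sample's randomness.

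Since $f$ is monotone with $f(\emptyset) = 0$, we have $f(R) \geq \max_{e \in R} f(\{e\})$. Writing $a_e := f(\{e\})$ and $E_t := \{e : a_e \geq t\}$, the layer-cake identity gives
\[
 f(R) \;\geq\; \int_0^\infty \mathbf{1}[R \cap E_t \neq \emptyset]\, dt, \qquad f(\{e_I\}) \;=\; \int_0^\infty \mathbf{1}[e_I \in E_t]\, dt.
\]
Taking expectations and invoking Fubini, it is enough to show that for every subset $B$ of the non-dummy stem elements,
\[
 \Pr[R \cap B \neq \emptyset] \;\geq\; \tfrac{1}{2}\,\Pr[e_I \in B].
\]

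To prove this, model the randomness as two independent Bernoulli sequences $X = (X_j)$ and $Y = (Y_j)$ with $\Pr[X_j{=}1] = \Pr[Y_j{=}1] = p_j$: here $X$ drives the walk, so $I = \min\{j : X_j = 1\}$ (taking value $\ell$ if no $X_j = 1$), while $R = \{e_j : j \leq I,\ Y_j = 1\}$. Let $I_Y^B := \min\{k \in B : Y_k = 1\}$ and $I_X^{\lnot B} := \min\{j \notin B : X_j = 1\}$, and let $\sigma$ be the involution on the joint sample space that swaps $X_j \leftrightarrow Y_j$ for every $j \in B$ (fixing coordinates with $j \notin B$); since $X_j, Y_j$ are i.i.d.\ Bernoulli$(p_j)$, $\sigma$ preserves the product measure. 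Write $A := \{I_Y^B \leq I\} = \{R \cap B \neq \emptyset\}$ and $E := \{I \in B\} = \{e_I \in B\}$. I will show $E \subseteq A \cup \sigma^{-1}(A)$; combined with $\Pr[A] = \Pr[\sigma^{-1}(A)]$, this yields $\Pr[E] \leq 2\,\Pr[A]$. On $E \setminus A$ we have $I \in B$ (which forces $I_X^{\lnot B} > I$, since $I$ is already the first $X$-activation anywhere) and $I_Y^B > I$. After applying $\sigma$, the new walk index is $\widetilde I = \min(I_X^{\lnot B}, I_Y^B) > I$, while the new first $B$-hit in $Y$ is $\widetilde I_Y^B = \min\{k \in B : X_k = 1\} = I$. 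Thus $\widetilde I_Y^B = I < \widetilde I$, so $\sigma(X, Y) \in A$. Integrating over $t \in [0, \infty)$ then delivers the lemma.

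The main obstacle is finding the right reduction. A direct per-index induction on the monotonicity recursion $\phi(i) \geq q_i\, \phi(i-1) + p_i\, a_i$ (where $\phi(i) := \E_{R \sim S_i(\p)}[f(R)]$) fails to close: when a stem element has small $p_k$ relative to later, larger activation probabilities, the inequality $\phi(i) \geq a_i/2$ can fail locally, with the deficit only recovered in aggregate by the stem-end term $\pi_\ell\,\phi(\ell)$. Flattening $f$ via the layer cake sidesteps all such bookkeeping, leaving $\{0,1\}$-valued indicators for which the elementary $X \leftrightarrow Y$ swap becomes available. Checking that $\sigma$ ships the bad event $E \setminus A$ into $A$ is the only subtle step, and it rests on the crisp observation $I \in B \Rightarrow I_X^{\lnot B} > I$.
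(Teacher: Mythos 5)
Your proof is correct, and it diverges from the paper's at exactly the step where the real work happens. The first half is the same reduction the paper uses: bound $f(R)$ below by the best singleton via monotonicity, apply the layer-cake decomposition, and reduce to showing $\Pr[R \cap B \neq \emptyset] \geq \tfrac12 \Pr[e_I \in B]$ for each level set $B$ (the paper's $W_x$). From there the paper computes both sides explicitly as sums of products of $p_i$'s and $q_j$'s (with the characteristic $q_j^2$ factors for $j \in W_x$, since those elements must fail both in the walk and in the fresh sample), conditions on the coins of elements outside $W_x$, and closes with the algebraic inequality $\sum_i a_i (\prod_{j<i} b_j)^2 \geq \tfrac12 \sum_i a_i \prod_{j<i} b_j$ (Claim~\ref{claim:cuteineq}). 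You instead prove the per-level inequality by a measure-preserving involution that swaps the walk coins and the sample coins on $B$, and check that it maps the bad event $\{I \in B\} \setminus \{R \cap B \neq \emptyset\}$ into the good event; the verification (in particular the observation that $I \in B$ forces $I_X^{\lnot B} > I$, and that after the swap the new first $B$-hit in $Y$ is exactly $I$ while the new walk index jumps strictly past $I$) is sound, and the 2-to-1 covering $E \subseteq A \cup \sigma^{-1}(A)$ gives the factor $\tfrac12$ directly. Your route avoids both the explicit product computations and the outside-$B$ conditioning, and makes the origin of the constant $2$ transparent; the paper's route is more computational but yields the standalone Claim~\ref{claim:cuteineq}, whose tightness (take all $a_i = \varepsilon$) is what the paper invokes in \S\ref{sec:submod-lower-bd} to argue that this style of analysis cannot give a constant better than $2$ on the stem --- a conclusion your coupling argument also supports, since the same example shows the $2$-to-$1$ covering is essentially exhausted, but which is less immediate to read off. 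Both arguments give the same constant and plug into the induction identically.
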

\begin{proof}
  For brevity, we use $\E_{I,R}[\cdot]$ as shorthand for $\E_{I,R \sim
    S_I(\p)}[\cdot]$ in the rest of the proof.  We prove this lemma by
  showing that
  \begin{gather}
   \E_{I,R}\agleft[ f(R)  \agright] \geq
   \E_{I,R}\agleft[ \max_{e_j \in R} f(e_j)  \agright] \geq
   \nicefrac12 \,
   \E_{I}\agleft[ f(e_I) \agright].
  \end{gather}
  The first inequality uses monotonicity. 
  The
  rest of the proof shows the latter inequality.

  For any real $x \geq 0$, let $W_x$ denote the indices of the elements
  $e_j$ on the stem with $f(e_j) \geq x$, and let $\overline{W}_x$
  denote the indices of stem elements not in $W_x$.  Then,
  \begin{align}
    \E_I\agleft[f(e_I) \agright] &= \int_0^{\infty} \Pr_I[f(e_I) \geq x] \,
    dx = \int_0^{\infty} \Pr_I[ I \in W_x] \, dx =  \int_0^{\infty}
    \sum_{i \in W_x} \agleft( p_i \,  \prod_{j<i}  q_j \agright) \, dx,
    \label{eq:monotadap}
  \end{align}
  where the last equality uses that the probability of exiting stem at
  $i$ is $ p_i \, \prod_{j<i} q_j $.

  On the other hand, we have 
{\small  \begin{align} & \E_{I,R}\agleft[ \max_{e_j \in R} f(e_j)
  \agright]  =  \int_0^{\infty} \Pr_{I,R}[\max_{e_j
      \in R} f(e_j) \geq x] \, dx   = \int_0^{\infty} \Pr_{I,R}[ R\cap W_x\ne \emptyset ] \, dx  \notag \\
       & = \int_0^{\infty}  \sum_{k \in W_x}
    \Pr_{I,R}[e_k\in R \mbox{ and $e_{j} \not\in R$ for all $j<k$ with $j\in W_x$}] \, dx  \notag \\
    &= \int_0^{\infty}  \sum_{k \in W_x} \Pr_I[ I \geq k] \cdot \Pr[e_k \text{ active}]  \cdot
    \Pr_I[\text{ $e_j$ inactive for all $j<k$ with $j\in W_x$ } ]
    \,dx \label{eq:nonadp-stem1}\\
&= \int_0^{\infty}  \sum_{k \in W_x}  \agleft( \prod_{j<k}  q_j  \agright) \cdot p_k \cdot \agleft( \prod_{j<k \, \&\,  j \in W_x} q_j \agright) \, dx \,=\, \int_0^{\infty}  \sum_{k \in W_x} \agleft( \prod_{j<k \, \&\, j \in W_x} q_j^2 \agright) \cdot \agleft(
      \prod_{j<k \, \&\, j \not\in W_x} q_j \agright) \cdot p_k \, dx. \label{eq:nonadp-stem2}
\end{align}}
Recall that $R\sim S_I(\p)$. Above \eqref{eq:nonadp-stem1} is because, for $e_k$ to be the first element in $W_x\cap R$ (i) the index $I$ must go past $k$, (ii) $e_k$ must be 
      active, and  (iii) 
all elements before $k$ on the
stem with indices in $W_x$  must be inactive (which are all independent events). Equation \eqref{eq:nonadp-stem2} is by definition of these probabilities. 
Renaming $k$ to $i$,
\begin{align}\label{eq:monotalg}
\E_{I,R}\agleft[ \max_{e_j \in R} f(e_j)  \agright] =  \int_0^{\infty}  \sum_{i \in W_x} \agleft( p_i \agleft(\prod_{j<i \, \&\,  j\in W_x} q_j^2 \agright) \agleft( \prod_{j<i \, \&\,  j\not\in W_x} q_j \agright) \agright) \, dx.
\end{align}

\ignore{  On the other hand, we have $\E_{I,R}\agleft[ \max_{e_j \in R} f(e_j)
  \agright]$ equal to 
  \begin{align} \label{eq:maxptwopower} 
  \int_0^{\infty} \Pr_I[\max_{e_j
      \in R} f(e_j) \geq x] \, dx & = \int_0^{\infty} \Pr_I[\argmax_{j
      \in R} f(e_j) \in W_x] \, dx = \int_0^{\infty}  \sum_{k \in W_x}
    \Pr_I[\argmax_{j \in R} f(e_j) = k] \, dx .
  \end{align}
  Next, for any value of $x$, and for any $k \in W_x$,
  \begin{align*}
    \Pr_I[\argmax_{j \in R} f(e_j) = k]& = \Pr_I[ I \geq k] \cdot
    \Pr_I[\text{ elements of $W_x$ before $k$ inactive} \mid I \geq k]
    \cdot \Pr[k \text{ is active}
    ]. \\
    \intertext{Indeed, for $e_k$ to be the max-value element amongst the
      random subset $R \sim S_I$, the index $I$ must have gone past $k$
      so that $e_k$ could be considered in the set $R$, then all
      elements corresponding to indices in $W_x$ lying above $k$ on the
      path must have been inactive, and finally $k$ must have been
      active (which is independent of all other decisions). Plugging in
      the expressions for these probabilities, we get}
    \Pr_I[\argmax_{j \in R} f(e_j) = k]    &=  \agleft( \prod_{j<k}  q_j  \agright) \cdot \agleft( \prod_{j<k \, \&\,  j \in W_x} q_j \agright) \cdot p_k = \agleft( \prod_{j<k \, \&\, j \in W_x} q_j^2 \agright) \cdot \agleft(
      \prod_{j<k \, \&\, j \not\in W_x} q_j \agright) \cdot p_k
  \end{align*}

We  substitute this expression into (\ref{eq:maxptwopower}) and rename $k$ to $i$:
\begin{align}\label{eq:monotalg}
\E_{I,R}\agleft[ \max_{e_j \in R} f(e_j)  \agright] =  \int_0^{\infty}  \sum_{i \in W_x} \agleft( p_i \agleft(\prod_{j<i \, \&\,  j\in W_x} q_j^2 \agright) \agleft( \prod_{j<i \, \&\,  j\not\in W_x} q_j \agright) \agright) \, dx.
\end{align}
}

To complete the proof, we compare equations~(\ref{eq:monotadap})
and~(\ref{eq:monotalg}) and want to show that for every $x$,
\begin{align} \label{eq:finalcomp} \sum_{i \in W_x} \agleft( p_i
    \agleft(\prod_{j<i \, \&\, j\in W_x} q_j^2 \agright) \agleft( \prod_{j<i
        \, \&\, j \not\in {W}_x} q_j \agright) \agright) \geq \nicefrac12
  \sum_{i\in W_x} \agleft( p_i \prod_{j<i} q_j \agright).
 \end{align}
While the expressions look complicated, things simplify considerably
when we condition on the outcomes of elements outside $W_x$. Indeed, 
observe that the LHS of~(\ref{eq:finalcomp}) equals 
\begin{align}
  &\phantom{=~~}  \E_{\overline{W}_x} \agleft[\sum_{i \in W_x} \agleft( p_i
      \agleft(\prod_{j<i \, \&\, j\in W_x} q_j^2 \agright) \agleft( \prod_{j<i
          \, \&\, j\not\in W_x} \textbf{1}_{q_j} \agright) \agright)
  \agright],\\
\intertext{where $\textbf{1}_{q_j}$ is an independent indicator r.v.\ taking
  value $1$ w.p.\ $q_j$, and we take the expectation over coin tosses
  for elements of stem outside $W_x$. Similarly, the RHS
  of~(\ref{eq:finalcomp}) is}
\nicefrac12\, \sum_{i\in W_x} \agleft( p_i \prod_{j<i} q_j \agright) &=
  \E_{\overline{W}_x} \agleft[ \nicefrac12\, \sum_{i \in W_x} \agleft( p_i
      \agleft(\prod_{j<i \, \&\, j\in W_x} q_j \agright) \agleft( \prod_{j<i
          \, \&\, j\not\in W_x} \textbf{1}_{q_j} \agright) \agright)
  \agright].
\end{align}
Hence, after we condition on the elements outside $W_x$, the remaining
expressions can be related using the following claim.  
\begin{restatable}{claim}{stemineq} 
  \label{claim:cuteineq} For any ordered set $A$ of
  probabilities $ \{a_1, a_2, \ldots, a_{|A|} \} $, let $b_j$ denote $
  1- a_j$ for $ j \in [1, |A|]$. Then,
  \[\sum_{i} a_i \agleft(\prod_{j<i } b_j \agright)^2  \geq \nicefrac12 \sum_{i}
  a_i  \agleft(\prod_{j<i } b_j \agright) \] 
\end{restatable}
\begin{proof}
\begin{align*}
\sum_i a_i\bigg(\prod_{j<i} b_j \bigg)^2  &=  \sum_i \frac{1-b_i^2}{1+b_i}\bigg(\prod_{j<i} b_j \bigg)^2  \geq \frac12 \sum_i (1-b_i^2)\bigg(\prod_{j<i} b_j^2\bigg) \\
& =^{(\star)} \frac12 \agleft( 1- \prod_{i} b_i^2 \agright)  = \frac12 \agleft( 1- \prod_{i} b_i \agright)  \agleft( 1+ \prod_{i} b_i \agright) \\
& \geq \frac12  \agleft( 1- \prod_{i} b_i \agright) =^{(\star)}
\frac12 \sum_{i}
  a_i  \agleft(\prod_{j<i } b_j \agright),
\end{align*}
where we have repeatedly used $a_j + b_j = 1$ for all $j$. The
equalities marked $(\star)$ move between two ways of expressing the
probability of at least one ``heads'' when the tails probability is
$b_j^2$ and $b_j$ respectively.
\end{proof}

Applying the claim to the elements in $W_x$, in order of their
distance from the root, completes the proof.
\end{proof}

\subsection{Lower Bounds}
\label{sec:submod-lower-bd}

Our analysis cannot be substantially improved, since
Claim~\ref{claim:cuteineq} is tight. Consider the setting with $|A|$
being infinite for now, and $a_i = \varepsilon$ for all $i$. Then the
LHS of Claim~\ref{claim:cuteineq} is $\varepsilon \sum_i (1 -
\varepsilon)^{2(i-1)} = \frac{\varepsilon}{1 - (1 - \varepsilon)^2}
\approx \nicefrac12 + O(\varepsilon)$, whereas the sum on the right is $1$.
Making $|A|$ finite but large compared to $\nicefrac1\varepsilon$ would give
similar results.

However, there is still hope that a smaller adaptivity gap can be proved
using other techniques. The best lower bound on adaptivity gaps for
monotone submodular functions we currently know is $\frac{e}{e-1}$. The
function is the rank function of a partition matroid, where the universe
has $k$ parts (each with $k^2$ elements) for a total of $n = k^3$
elements. Each element has $p_e = \nicefrac1k$. The probing constraint is a
cardinality constraint that at most $k^2$ elements can be probed.  In
this case the optimal adaptive strategy can get $(1 - o(1))k$ value,
whereas any non-adaptive strategy will arbitrarily close to $(1 -
\nicefrac1e)k$ in expectation. (See, e.g.,~\cite[Section~3.1]{AN16}.)

\subsection{Finding Non-Adaptive Polices}

A non-adaptive policy is given by a fixed sequence $\sigma = \langle
e_1, e_2, \ldots, e_k\rangle$ of elements to probe (such that $\sigma$
satisfies the given prefix-closed probing constraint. If $A$ is the set
of active elements, then the value we get is $\E_{A \sim X(\p)}[\fmax(A
\cap \{e_1, \ldots, e_k\})] = \E_A[f(A \cap \{e_1, \ldots, e_k\})]$, the
inequality holding for monotone functions. If we define $g(S) := \E_{A
  \sim X(\p)}[ f(X \cap A)]$, $g$ is also a monotone submodular
function. Hence finding good non-adaptive policies for $f$ is just
optimizing the monotone submodular function $g$ over the allowed
sequences. E.g., for the probing constraint being a matroid constraint,
we can get a $\frac{e}{e-1}$-approximation~\cite{CCPV11}; for it being
an orienteering constraint we can get an $O(\log n)$-approximation in
quasi-polynomial time~\cite{CP05}.

For non-monotone functions (discussed in the next section), we can
approximate the $\fmax(S)$ function by $E_{R \sim X(\nicefrac12)}[f(S
\cap R)]$, and losing a factor of $4$, reduce finding good non-adaptive
strategies to (non-monotone) submodular optimization over the probing
constraints.

\newcommand{\adapbar}{\overline{\adap}}
\newcommand{\algbar}{\overline{\alg}}
\newcommand{\nil}{\bot}

\section{Non-Monotone Non-Negative Submodular Functions}
\label{sec:non-mono}

We now prove Theorem~\ref{thm:NmonotSubmod}.  The proof
for the monotone case used monotonicity in several places, but perhaps
the most important place was to claim that going down the tree, both
\adap and \alg could add all active elements to the set. This ``online''
feature seemed crucial to the proof. In contrast, when the adaptive
strategy \adap reaches a leaf in the non-monotone setting, it chooses
the best subset within the active elements; a similar choice is done by
the non-adaptive algorithm. This is why we have $\fmax(A_\ell)$
in~(\ref{eq:adap}) versus $f(A_\ell)$ in~(\ref{eq:simpler-mono}).

Fortunately, Feige el al.~\cite{FMV-SICOMP11} show that for non-negative
non-monotone submodular functions, the simple strategy of picking every
active element independently w.p.\ half gives us a near-optimal possible
subset. Losing a factor of four, this result allows us to analyze the
performance relative to an adaptive {\em online} algorithm $\adap_{on}$
which selects (with probability $\frac12$) each probed element that
happens to be active. 
The
rest of the proof is similar (at a high level) to the monotone case: to
relate $\adap_{on}$ and \alg we bound them using comparable terms
($\adapbar$ and $\algbar$ in Definition~\ref{defn:proxy}) and apply induction. Altogether we will obtain:
\[ \textstyle \alg \stackrel{(\text{Lemma}~\ref{lem:big-lemma}(ii))}{\ge} \algbar
\stackrel{(\text{Lemma}~\ref{lem:algbar-adapbar})}{\ge} \frac15\cdot
\adapbar \stackrel{(\text{Lemma}~\ref{lem:big-lemma}(i))}{\ge}  \frac{1}{10}\cdot \adap_{on} \stackrel{(\ref{eq:fmv})}{\ge}
\frac{1}{40}\cdot \adap. \]


In the inductive proof, we will work with ``contracted'' submodular
functions $g$ obtained from $f$, which may take negative values but have
$g(\emptyset)=0$.  In order to deal with such issues, the induction here
is more complex than in the monotone case.

We first define the surrogates $\adapbar$ and $\algbar$ for \adap and
\alg recursively as follows.
\begin{definition}\label{defn:proxy}
  For any strategy tree $\T$ and submodular function $g$ with
  $g(\emptyset)=0$, let
  \begin{OneLiners}
  \item $I$ be the node at which a random walk according to $\pi_\T$
    exits the stem.
  \item $R\sim S_I(\p)$ 
    where $S_I$ denotes the elements on the stem until node $I$.
  \item $J = \arg\max \{g(e) \mid e\in R, g(e)>0\}$ w.p.\ $\frac{1}{2}$ and
    $J=\nil$ w.p.\ $\frac{1}{2}$.
  \end{OneLiners}
  Then we define:
  \begin{gather*}
    \adapbar(\T,g) := \E_{I,J}\left[ g(I) + g(J) + \adapbar(\T_I,
      g_{I \cup J})\right] \quad \mbox{and} \quad \algbar(\T,g) :=
    \E_{I,J}\left[ g(J) + \algbar(\T_I, g_{I\cup J})\right].
  \end{gather*}
\end{definition}
Above we account for the
non-monotonicity of the function, via this process of random sampling
used in the definition of $\adapbar$ and $\algbar$. One problem with
following the proof from \S\ref{sec:monotone} is that when we induct on
the ``contracted'' function $f_S$ for some set $S$, this function may
not be non-negative any more. Instead, our proof considers the entire
path down the tree and argues about it at one shot; to make the analysis
easier we imagine that the non-adaptive algorithm picks at most one item
from the stem, i.e., the one with the highest marginal value.

\begin{restatable}{lemma}{nonmonolemma}
  \label{lem:big-lemma}
  For any strategy tree $\T$, the following hold:
  \begin{enumerate}
  \item[(i)] For any non-negative submodular function $f$,
    $\adapbar(\T,f)\ge \frac12 \adap_{on}(\T,f)$.
  \item[(ii)] For any submodular function $g$, $\alg(\T,g)\ge
    \algbar(\T,g)$.
  \end{enumerate}
\end{restatable}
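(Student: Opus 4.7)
My plan is to prove both parts by induction on the \depth of $\T$, with the base case (\depth$\,=0$) being trivial since all four quantities vanish when $\T$ is a single node.

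For part (ii), I would argue by coupling rather than pure recursion. Unrolling the definition along the successive stems of $\T$ gives $\algbar(\T,g) = \E\big[\sum_k g_{V_{k-1}}(J_k)\big]$, where $V_{k-1}=\bigcup_{j<k}\{I_j,J_j\}\setminus\{\nil\}$ and $J_k$ is drawn from a random active subset $R_k \sim S_{I_k}(\p)$ of the $k$-th stem. The non-adaptive algorithm $\alg$ samples $\ell\sim\pi_\T$ and then independently observes activeness $R \sim X(\p)$, so the distribution of active elements on the $k$-th stem of that path is exactly $S_{I_k}(\p)$; I can therefore couple the randomness so that each realized $J_k$ (when non-$\nil$) lies in the active set on $\alg$'s path. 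Since $\alg$ takes $\fmax$ over subsets of these active elements, $\fmax(R\cap \elt(P_\ell)) \ge g\big(\bigcup_k \{J_k\}\setminus\{\nil\}\big)$. Telescoping in the order $J_1,J_2,\ldots$ and using submodularity (marginals decrease under larger contractions, and $\{J_1,\ldots,J_{k-1}\}\subseteq V_{k-1}$) gives $g\big(\bigcup_k \{J_k\}\big) \ge \sum_k g_{V_{k-1}}(J_k)$. Taking expectations yields $\alg(\T,g)\ge \algbar(\T,g)$.

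For part (i), I would use the recursive expansions $\adap_{on}(\T,f) = \E_I\big[\tfrac12(f(I)+\adap_{on}(\T_I,f_I)) + \tfrac12 \adap_{on}(\T_I,f)\big]$ and $\adapbar(\T,f) = \E_{I,J}\big[f(I)+f(J)+\adapbar(\T_I,f_{I\cup J})\big]$, applying IH to the subtree $\T_I$. Two ``surpluses'' are available for $\adapbar$ to dominate $\tfrac12 \adap_{on}$: the $\tfrac34 \E[f(I)]$ that remains after $\tfrac14 \E[f(I)]$ from $\adap_{on}$ is subtracted, and the non-negative $\E[f(J)]$ (since $J$ is chosen as the argmax of positive contracted marginals with probability $\tfrac12$, with $f(J)\ge 0$ deterministically). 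These surpluses must compensate for the fact that $\adap_{on}(\T_I,f_{I\cup J}) \le \adap_{on}(\T_I,f_I) \le \adap_{on}(\T_I,f)$ by submodularity, so IH applied to the most-contracted function gives a bound weaker than what the unrolled recursion of $\adap_{on}$ needs.

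The hard part will be that $f_{I\cup J}$ may take negative values, so the IH as stated does not apply at the recursive call. A naive strengthening to arbitrary submodular $g$ with $g(\emptyset)=0$ already fails on simple examples (e.g.\ $g(\{1\})=-1$ with a one-element tree gives $\adapbar=-1 < -\tfrac14 = \tfrac12\adap_{on}$), so the induction must be set up carefully. My plan is either (a) to exploit the positive-marginal guarantee on $J$ via a variant of Lemma~\ref{lemma:stemmass} applied to the contracted function, so that the $\E[f(J)]$ surplus absorbs the contraction-induced deficit on a level-by-level basis, or (b) to sidestep the issue by unrolling $\adapbar$ and $\adap_{on}$ as expectations of $f$ on explicit random sets along the random path and comparing via submodularity together with the FMV inequality $\E_{Y\sim S(1/2)}[f(Y)] \le \fmax(S)$.
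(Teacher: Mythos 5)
Your part (ii) is correct and is essentially the paper's own argument: condition on the root--leaf path, observe that $\{J_1,\dots,J_d\}$ is contained in the active set available to the non-adaptive algorithm, telescope the marginals via submodularity (contracting by the smaller set $\{J_1,\dots,J_{k-1}\}$ instead of $L_{k-1}$), and use that \alg picks the best subset of its active elements.

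For part (i), however, you have correctly diagnosed the obstruction---the contracted functions $f_{I\cup J}$ lose non-negativity, so the statement cannot simply be strengthened and inducted on---but you have not supplied the idea that overcomes it, and neither of your two proposed routes is carried out. The paper does not induct at all in this lemma (induction is deferred to Lemma~\ref{lem:algbar-adapbar}); instead it conditions on the entire path, writes the conditional value of $\adapbar$ as $\sum_k \big(f_{L_{k-1}}(I_k)+f_{L_{k-1}}(J_k)\big)$ with $L_{k-1}=\{I_1,J_1,\dots,I_{k-1},J_{k-1}\}$, and proceeds in two steps. First, a pointwise case analysis: if $I_k\ne J_k$, submodularity together with $f_{L_{k-1}}(\emptyset)=0$ gives $f_{L_{k-1}}(I_k)+f_{L_{k-1}}(J_k)\ge f_{L_{k-1}}(\{I_k,J_k\})$, and if $I_k=J_k$ the same holds because $J_k$ is by definition chosen with strictly positive marginal; telescoping then yields $f(\{I_1,J_1,\dots,I_d,J_d\})$. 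Second---and this is the ingredient missing from your sketch---one needs $\E_{J_1,\dots,J_d}\big[f(\{I_1,J_1,\dots,I_d,J_d\})\big]\ge \tfrac12\, f(\{I_1,\dots,I_d\})$, which the paper obtains from Lemma~\ref{lem:BFNS} (the sampling lemma of Buchbinder et al.) applied to the non-negative submodular function $h(S):=f(S\cup\{I_1,\dots,I_d\})$, using that each $J_k$ is included with probability at most $\tfrac12$. Your route (b) has the right shape (compare explicit random sets along the path), but the tool you name---the FMV relation between $\E_{Y\sim S(1/2)}[f(Y)]$ and $\fmax(S)$---is not the inequality needed: what must be controlled is the possible loss from adjoining the random elements $J_k$ to the deterministic set $\{I_1,\dots,I_d\}$, which is exactly what the at-most-$p$ sampling lemma provides and FMV does not. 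Your route (a), a variant of Lemma~\ref{lemma:stemmass} for contracted functions, is speculative and runs into the same negativity issue you yourself exhibit; in the paper that lemma is used only to prove $\algbar\ge\tfrac15\adapbar$, not here. As it stands, part (i) is a plan with a genuine gap rather than a proof.
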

 We make use of the following property of submodular functions.
\ignore{
\begin{lemma}[\cite{FMV-SICOMP11}, Lemma~2.2]
  \label{lem:FMVlem22}
  For any non-negative submodular function $f$ on set $A$,
  \[ \E_{R \sim A(p)}[f(R) ] \geq (1-p) f(\emptyset) + p f(A).   \]
\end{lemma}
}
\begin{lemma}[\cite{BFNS-SODA14}, Lemma~2.2]\label{lem:BFNS}
  For any non-negative submodular function $h: 2^A \to \R_{\geq 0}$
  (possibly with $h(\emptyset) \neq 0$) let $S\sse A$ be a random subset
  that contains each element of $A$ with probability \emph{at most} $p$
  (and not necessarily independently). Then, $\E_{S} [f(S)] \geq (1-p)
  \cdot f(\emptyset)$.
\end{lemma}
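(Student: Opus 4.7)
The statement is the sampling lemma of Buchbinder-Feldman-Naor-Schwartz, which extends Feige-Mirrokni-Vondr\'ak's independent-sampling lemma to the dependent setting. I would prove it in two stages: first treat the case in which $S$ is drawn from a product distribution whose element-wise probabilities are at most $p$, and then handle arbitrary joint distributions satisfying the same marginal constraint.

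\textbf{Step 1: independent sampling.} Let $F \colon [0,1]^A \to \R$ denote the multilinear extension of $h$, so that $F(\mathbf{q}) = \E[h(R(\mathbf{q}))]$ when $R(\mathbf{q})$ contains each element $e$ independently with probability $q_e$. Then $F(\mathbf{0}) = h(\emptyset)$ and $F \geq 0$ pointwise (since $h \geq 0$). A standard calculation shows that submodularity of $h$ is equivalent to
\[\frac{\partial^2 F}{\partial q_e\, \partial q_{e'}} \;=\; \E_R\bigl[h(R \cup \{e,e'\}) + h(R) - h(R \cup \{e\}) - h(R \cup \{e'\})\bigr] \;\leq\; 0 \quad\text{for all } e \neq e',\]
which implies that $F$ is concave along every direction in the non-negative orthant. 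For independent marginals $\mathbf{p} = (p_e)_e$ with $t^\star := \max_e p_e \leq p$, apply concavity to the ray $t \mapsto F(t \cdot \mathbf{p}/t^\star)$ on $[0,1]$ and evaluate at $t = t^\star$:
\[F(\mathbf{p}) \;\geq\; (1-t^\star)\, F(\mathbf{0}) + t^\star \, F(\mathbf{p}/t^\star) \;\geq\; (1-t^\star)\, h(\emptyset) \;\geq\; (1-p)\, h(\emptyset),\]
using $F(\mathbf{p}/t^\star) \geq 0$ and $h(\emptyset) \geq 0$. This settles the independent case; the edge case $t^\star = 0$ is trivial.

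\textbf{Step 2 and the main obstacle.} For a general joint distribution $\mu$ on $2^A$ with marginals $\Pr_\mu[e \in S] \leq p$, the map $\mu \mapsto \E_\mu[h(S)] = \sum_T \mu_T h(T)$ is a linear functional over the polytope $\mathcal{P}_p$ of feasible joint distributions, so its minimum is attained at an extreme point of $\mathcal{P}_p$. The plan is to verify the bound $(1-p) h(\emptyset)$ at every such extreme point by exploiting their combinatorial structure together with the non-negativity and submodularity of $h$; for instance, a two-point distribution placing mass $p$ on some set $T$ and mass $1-p$ on $\emptyset$ gives $\E_\mu[h(S)] = p\, h(T) + (1-p)\, h(\emptyset) \geq (1-p)\, h(\emptyset)$ immediately from $h(T) \geq 0$, and more general extreme points decompose similarly. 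The main obstacle is that, unlike in the monotone setting, introducing correlations into the joint distribution can either increase or decrease $\E_\mu[h(S)]$ for non-monotone submodular $h$, so there is no monotone coupling reducing the dependent case to the independent one of Step 1; instead one must directly analyze the adversarial extreme distributions (or, equivalently, design a submodularity-based exchange argument between correlated and independent copies with matching marginals), and this case analysis is the technical heart of BFNS's original proof.
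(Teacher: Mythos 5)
The paper does not prove this lemma at all---it is quoted with a citation to Buchbinder--Feldman--Naor--Schwartz---so your attempt has to stand on its own, and as written it does not. Step~1 is correct: the multilinear extension has nonpositive cross second partials and vanishing diagonal ones, hence is concave along non-negative directions, and the interpolation $F(\mathbf{p})\ge(1-t^\star)F(\mathbf{0})+t^\star F(\mathbf{p}/t^\star)\ge(1-p)\,h(\emptyset)$ is a valid proof of the \emph{independent} case. But the independent case is essentially the Feige--Mirrokni--Vondr\'ak statement and is not what the lemma asserts; in the application inside Lemma~\ref{lem:big-lemma} the elements $J_1,\ldots,J_d$ are correlated, so the dependent case is exactly what is needed. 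Step~2, where the real content lies, is a plan rather than a proof: you pass to extreme points of the polytope $\{\mu\ge 0,\ \sum_T\mu_T=1,\ \sum_{T\ni e}\mu_T\le p\ \forall e\}$ but verify the bound only for a two-point distribution (mass $p$ on one set, mass $1-p$ on $\emptyset$), and the claim that ``more general extreme points decompose similarly'' is unsubstantiated---extreme points of this polytope can have support of size up to $|A|+1$ and do not split into such two-point pieces, and no exchange or decomposition argument is supplied. You yourself flag this as the main obstacle, so the proof is incomplete precisely where the lemma is nontrivial.

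The gap closes with one short convexity argument, with no case analysis of adversarial distributions (so your description of the correlated case as the ``technical heart'' requiring such an analysis is off the mark). Let $\hat h$ be the Lov\'asz extension of $h$; it is convex and agrees with $h$ on indicator vectors. By Jensen,
\begin{gather*}
\E_S[h(S)] \;=\; \E_S[\hat h(\chi_S)] \;\ge\; \hat h(x), \qquad x := \E_S[\chi_S],
\end{gather*}
and only the marginals enter, so arbitrary correlations are handled for free; by hypothesis $x_e\le p$ for every $e$. Using the threshold representation $\hat h(x)=\E_{\theta\sim U[0,1]}\bigl[h(\{e: x_e>\theta\})\bigr]$, the threshold set is empty whenever $\theta>p$, and $h\ge 0$ elsewhere, so $\hat h(x)\ge (1-p)\,h(\emptyset)$. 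This yields the lemma in full generality; if you prefer to keep your multilinear-extension Step~1, note that it cannot be salvaged by a coupling to the independent case (as you correctly observe), so switching to the Lov\'asz extension---whose convexity is exactly what tolerates dependence---is the natural repair.
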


\begin{proof}[Proof of Lemma~\ref{lem:big-lemma}]
  We condition on a random leaf $\ell$ drawn according to
  $\pi_{\T}$. Let $I_1,\ldots ,I_d$ denote the sequence of nodes that
  correspond to active elements on the path $P_\ell$, i.e., $I_1$ is the
  point where $P_\ell$ exits the stem of \T, $I_2$ is the point where
  $P_\ell$ exits the stem of $\T_{I_1}$ etc. Then, the adaptive online
  value is exactly $f(\{I_1,\ldots I_d)$. For any $k=1,\ldots, d$ let
  $P_\ell[I_{k-1},I_{k }]$ denote the elements on path $P_\ell$ between
  $I_{k-1}$ and $I_{k}$. Also let $R$ denote the random subset where
  each element $e$ on path $P_\ell$ is chosen independently w.p.\ $p_e$.

  For $k=1,\ldots, d$, define $J_k$ as follows:
  \begin{gather*}
    J_k = \arg\max \{ f_{L_{k-1}} (e) \mid e\in R\cap \ell[I_{k-1},I_{k
    }], f_{L_{k-1}} (e) > 0 \} \mbox{ w.p. } \frac{1}{2} \quad
    \mbox{and} \quad J_k=\nil \mbox{ w.p. } \frac{1}{2},
  \end{gather*}
  where $L_{k-1} :=\{I_1,\ldots,I_{k-1}\}\cup \{J_1,\ldots,J_{k-1}\}$.
  In words, the sets $L$ contain the exit points from the stems, and for
  each stem also  the element with maximum marginal value (if any) with
  probability half.

  For (i), by Definition~\ref{defn:proxy}, the value of $\adapbar(\T,f)$
  conditioned on path $P_\ell$ and elements $J_1,\ldots,J_d$
  is 
  \begin{equation}
    \sum_{k=1}^d f_{L_{k-1}}(I_k) + f_{L_{k-1}}(J_k) \quad \ge \quad
    \sum_{k=1}^d f_{L_{k-1}}(\{I_k,J_k\}) \quad =\quad
    f(\{I_1,J_1,\ldots I_d,J_d\}). \label{eq:ad-path-proxy} 
  \end{equation}
  The inequality follows from the following two cases:
  \begin{itemize}
  \item If $I_k\ne J_k$, then by submodularity of $f_{L_{k-1}}$,
    \begin{gather*}
      f_{L_{k-1}}(I_k) + f_{L_{k-1}}(J_k)\ge f_{L_{k-1}}(\{I_k,J_k\}) +
      f_{L_{k-1}}(\emptyset) = f_{L_{k-1}}(\{I_k,J_k\}).
    \end{gather*}
  \item If $I_k = J_k$, then by choice of $J_k$ we have
    $f_{L_{k-1}}(J_k)>0$ and
    \begin{gather*}
      f_{L_{k-1}}(I_k) + f_{L_{k-1}}(J_k) = 2\cdot
      f_{L_{k-1}}(J_k)>f_{L_{k-1}}(J_k).
    \end{gather*}
  \end{itemize}
  Using~\eqref{eq:ad-path-proxy} and taking expectation over the $J$s,
  $\adapbar(\T,f)$ conditioned on path $P_\ell$ is at least
  \begin{gather*}
    \E_{J_1,\ldots J_d}\left[ f(\{I_1,J_1,\ldots I_d,J_d\}) \right] \ge
    \frac12\cdot f(\{I_1,\ldots I_d\}).
  \end{gather*}
  Above we used Lemma~\ref{lem:BFNS} on the non-negative submodular
  function $h(S) := f(S \cup \{I_1, \ldots, I_d\})$, using the fact that
  the set $\{J_1,\ldots,J_d\}$ contains each element with probability at
  most half. Finally, deconditioning over $\ell$ (i.e., over $I_1,\ldots
  I_d$) proves part~(i).

  For part~(ii), by Definition~\ref{defn:proxy}, the value of
  $\algbar(\T,g)$ conditioned on path $P_\ell$ and elements
  $J_1,\ldots,J_d$ is
  \begin{equation*}
    \sum_{k=1}^d g_{L_{k-1}}(J_k)  \quad \le \quad \sum_{k=1}^d
    g_{J_1,\ldots J_{k-1}}( J_k ) \quad =\quad g(\{J_1,\ldots
    ,J_d\}), \label{eq:nonad-path-proxy} 
  \end{equation*}
  where the inequality is by submodularity of $g$. Since \alg chooses
  the maximum value subset in $R$ and $\{J_1,\ldots ,J_d\}\sse R$,
  taking expectations over $\ell$ and $R$, we prove part~(ii).
\end{proof}

\begin{lemma}
  \label{lem:algbar-adapbar}
  For any strategy tree $\T$ and submodular function $g$ with
  $g(\emptyset)=0$, $\algbar(\T,g)\ge \frac15\cdot \adapbar(\T,g)$.
\end{lemma}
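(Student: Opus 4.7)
The plan is to mimic the inductive structure of the monotone proof of Theorem~\ref{thm:monadapvsalg}, while accounting for the extra randomization baked into $\adapbar$ and $\algbar$. I would induct on the \depth of $\T$; the base case of \depth~$0$ is immediate since both sides vanish. For the inductive step, observe that for every realization of $(I,J)$ the contracted function $g_{I\cup J}$ is submodular with $g_{I\cup J}(\emptyset)=0$, and the subtree $\T_I$ has strictly smaller \depth (with the convention, as in the monotone case, that if the walk never leaves the stem then $\T_I$ is a \depth-zero tree and contributes $0$). The induction hypothesis therefore gives
\[
\E_{I,J}[\algbar(\T_I,\, g_{I\cup J})] \;\geq\; \tfrac{1}{5}\cdot \E_{I,J}[\adapbar(\T_I,\, g_{I\cup J})].
\]

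Plugging this into the recursive formulas from Definition~\ref{defn:proxy} and cancelling the common recursive term, the desired inequality $\algbar(\T,g)\geq \tfrac{1}{5}\adapbar(\T,g)$ reduces to a single ``stem'' inequality
\[
4\cdot \E_{I,J}[g(J)] \;\geq\; \E_I[g(I)].
\]

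For this last step I would pass to the positive part $g^+(e) := \max(g(e), 0)$. By the definition of $J$ (which is $\nil$ with probability $\tfrac{1}{2}$ and otherwise the argmax of strictly positive $g$-values on $R\sim S_I(\p)$), one has $\E_{I,J}[g(J)] = \tfrac{1}{2}\cdot \E_{I,R}[\max_{e\in R} g^+(e)]$, with the convention $\max_\emptyset = 0$. Trivially $\E_I[g(I)] \leq \E_I[g^+(e_I)]$. Hence the stem inequality follows once I establish $\E_{I,R}[\max_{e\in R} g^+(e)] \geq \tfrac{1}{2}\cdot \E_I[g^+(e_I)]$, and this is exactly the second inequality established inside the proof of Lemma~\ref{lemma:stemmass}. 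Crucially, the computation \eqref{eq:monotadap}--\eqref{eq:monotalg} together with Claim~\ref{claim:cuteineq} uses only the singleton values of the function and their non-negativity, so the argument applies verbatim to the non-negative singleton values $g^+(e_j)$ even though $g^+$ need not itself be submodular or monotone as a set function.

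The main obstacle I anticipate is the sign issue from non-monotonicity on two fronts: the IH must be applicable to contracted functions $g_{I\cup J}$ that may take negative values on non-empty sets, and the stem comparison must still go through when $g(I)$ is sometimes negative. The first point is absorbed by the statement of the lemma, which demands only $g(\emptyset)=0$ rather than $g\geq 0$. The second is precisely the reason Definition~\ref{defn:proxy} restricts the non-adaptive ``sample'' $J$ to positive values of $g$ on $R$: that design choice is what lets the stem comparison be reduced cleanly to the non-negative setting through $g^+$, where the singleton-level argument of Lemma~\ref{lemma:stemmass} can be reused without modification.
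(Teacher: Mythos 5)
Your proposal is correct and follows essentially the same route as the paper's proof: induct on \depth, reduce to the stem inequality $4\,\E_{I,J}[g(J)]\ge \E_I[g(I)]$, pass to the positive singleton values (the paper calls them $a_i = \max\{g(i),0\}$, you call them $g^+$), use $\E_{I,J}[g(J)] = \frac12\,\E_{I,R}[\max_{e\in R}g^+(e)]$ from the definition of $J$, and invoke the second inequality inside the proof of Lemma~\ref{lemma:stemmass}. Your explicit remark that the argument in Lemma~\ref{lemma:stemmass} only needs non-negativity of singleton values, not monotonicity of the set function, is exactly the point the paper leaves implicit when it invokes that lemma for the (non-monotone) $a_i$.
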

\begin{proof}
  We proceed by induction. Recall the notation in
  Definition~\ref{defn:proxy}. For each node $i$ on the stem of \T
  define $a_i := \max\{g(i),0\}$. Note that $g(J) = a_J$ by choice of
  $J$: if $J \ne \nil$ we have $g(J)>0$ and if $J=\nil$,
  $g(J)=g(\emptyset)=0 = a_J$. We will show that
  \begin{equation}
    \label{eq:non-mon-stem} \E_{I,J}[a_I] \le 4\cdot \E_{I,J}[a_J]. 
  \end{equation}
  Then the definition of $\adapbar(\T,g)$ and $\algbar(\T,g)$, and
  induction on $\T_I$ and $g_{I\cup J}$, would prove the
  lemma. 

  Let $K = \arg\max \{ a_e \mid e\in R\}$ be the r.v.\ denoting the
  maximum weight active (i.e., in $R$) element on the stem. Then, by
  definition of $J$, we have $\E_{I,J} [a_J] = \frac12 \E_{I,K}
  [a_{K}]$. Finally we can use Lemma~\ref{lemma:stemmass} from
  Section~\ref{sec:monotone} to obtain $\E_{I,K} [a_{K}]\ge \frac12
  \E_{I,J}[a_I]$, which proves~\eqref{eq:non-mon-stem}.
\end{proof}

\newcommand{\OO}{\ensuremath{ \mathcal{O}}}

\def\smlc{\ensuremath{{\sf StocMLSC}}\xspace}

\section{Monotone XOS Functions}

In this section we study adaptivity gaps for monotone non-negative XOS
functions. To recall, a function is monotone XOS if there exist linear functions
$\C_1, \C_2, \ldots, \C_{\W}: X \to \R^+$ such that $f(S) =
\max_{i=1}^{\W} \{ \sum_{e\in S} \C_i(e) \}$. To simplify notation we
use $\C_i(S):= \sum_{e\in S} \C_i(e)$ for any $i$ and subset $S\sse X$.
The {width} of an XOS function is the smallest number $\W$ such that $f$
can be written as the maximum over $\W$ linear functions. Let $\T^*$
denote the optimal adaptive strategy.  By monotonicity $\fmax = f$ and
\eqref{eq:adap} gives
\[
  \adap(\T^*, f) = \E_{\ell \gets \pi_{\T^*}} [f(A_\ell)]. 
\]

The following is our main result in this section.
\begin{theorem}
  \label{thm:xos}
  The stochastic probing problem for monotone XOS functions of width $\W$ has
  adaptivity gap $O(\log \W)$ for any prefix-closed
  constraints. Moreover, there are instances with $W = O(n)$ and adaptivity gap
  $\Omega(\frac{\log \W}{\log\log \W})$.
\end{theorem}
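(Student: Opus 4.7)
The plan is to prove the upper and lower bounds separately; the upper bound is the more substantive half.

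\textbf{Upper bound.} Let $\tstar$ denote the optimal adaptive tree, and write $V := \adap(\tstar, f) = \E_{\ell\sim\pi_{\tstar}}[\max_i \C_i(A_\ell)]$. The non-adaptive algorithm I will analyze outputs the better of (i) the ``random-path'' policy induced by $\tstar$ (sample $\ell\sim\pi_{\tstar}$ and probe $\elt(P_\ell)$), which has expected value $\alg := \E_\ell \E_R[\max_i \C_i(R\cap \elt(P_\ell))]$; and (ii) the best singleton $\{e\}$, whose value is $M := \max_e p_e f(\{e\})$. For the polynomial-time statement (Theorem~1.4), (i) can be replaced by the $\W$ non-adaptive sequences that each maximize the positive linear objective $\sum_e p_e \C_i(e)\,\mathbf{1}[e \in \sigma]$ over the prefix-closed constraint using the assumed oracle; since $f \ge \C_i$ pointwise, each such sequence is at least as valuable on $f$ as it is on its designated $\C_i$. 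A basic identity that drives the analysis is $\mu_i := \E_\ell[\C_i(A_\ell)] = \E_\ell \E_R[\C_i(R\cap \elt(P_\ell))]$, which follows from $\Pr[e\in A_\ell] = p_e \cdot \Pr[e\in\elt(P_\ell)]$ by the tree structure; consequently $\alg \ge \max_i \mu_i$.

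The heart of the upper bound is a Freedman-type martingale concentration applied separately to each of the $\W$ linear functions along the random root-leaf walk in $\tstar$. Fix $i$: the partial sums $\C_i(A_\ell)=\sum_t \C_i(e_t)\,\mathbf{1}[e_t\text{ active}]$ decompose along the walk with natural filtration $\mathcal{F}_t$, and after centering form a martingale whose per-step conditional mean is $p_{e_t}\C_i(e_t) \le M$ (since $p_e\C_i(e) \le p_e f(\{e\}) \le M$). \textbf{The main technical obstacle} is that the unnormalized step sizes $\C_i(e_t)$ themselves need not be bounded by $M$: a rare element can carry weight up to $M/p_e$ with $p_e$ tiny, so Freedman's inequality cannot be applied directly with a clean jump bound. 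I will handle this by truncating each $\C_i$ at an appropriately chosen threshold $T = \Theta(M)$ and analyzing the two pieces separately. On the low part $\C_i^{\le T}$, Freedman's inequality yields a Bennett-type tail bound with jumps of size $T$ and predictable quadratic variation controlled by $T\mu_i$; a union bound over the $\W$ linear functions, an AM-GM on the resulting subgaussian-plus-subexponential deviation, and tail integration give $\E_\ell[\max_i \C_i^{\le T}(A_\ell)] \le O(\max_i \mu_i) + O(M\log\W)$. On the high part $\C_i^{>T}$, every contributing element has $p_e < 1$ with $p_e \C_i(e) \le M$; a direct first-moment estimate combined with this inequality lets us charge the expected high contribution against singleton values and shows it is also $O(M \log \W)$. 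Combining the two pieces with $\max_i \mu_i \le \alg$ gives $V \le O(\log\W) \cdot (\alg + M)$, establishing the $O(\log \W)$ adaptivity gap.

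\textbf{Lower bound.} For the $\Omega(\log\W/\log\log\W)$ lower bound I plan a ``many-buckets'' construction. Partition the ground set into $\W = \Theta(n)$ buckets, take $\C_i$ to be the (scaled) indicator of the $i$-th bucket, and tune the bucket size and per-element activation probability so that the expected number of active elements in each individual bucket is $O(1)$, while across $\W$ independent buckets the \emph{maximum} reaches $\Theta(\log\W/\log\log\W)$ with constant probability (the standard Poisson-maximum/birthday-paradox calculation). The probing constraint is a cardinality budget too small to cover all buckets uniformly, so an adaptive strategy exploits its ability to probe a bucket, observe its yield, and abandon disappointing buckets, eventually locating the anomalously productive one; a non-adaptive strategy must commit its budget up front without knowing where the spike will occur and pays the $\Omega(\log\W/\log\log\W)$ factor. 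The main step here is to verify, by a concentration argument and the pigeonhole of the non-adaptive budget allocation, that no fixed allocation can match the adaptive bound.
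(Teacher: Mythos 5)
Your upper-bound skeleton (Freedman per linear function, union bound over the $\W$ functions, plus a separate device for heavy coefficients) is the right family of ideas, but two of your steps fail as stated, and both failures trace to the fact that you never use the subtree property of the optimal tree (Assumption~\ref{defn:subtree}: the adaptive value of any subtree of $\tstar$ is at most $\OPT$), which is the linchpin of the paper's proof. First, charging the high part against the best singleton $M=\max_e p_e f(\{e\})$ is simply false: take $\W$ disjoint elements, with $e_i$ the sole support of $\C_i$, $\C_i(e_i)=1$, $p_{e_i}=1/\W$, and all probes feasible. Then every element is ``high'' for $T=\Theta(M)=\Theta(1/\W)$, the adaptive (hence high-part) value is $1-(1-1/\W)^{\W}=\Theta(1)$, yet $O(M\log\W)=O(\log\W/\W)$ --- off by a factor $\sim\W/\log\W$. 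The correct benchmark for heavy elements is not a singleton but the best feasible set's probability of containing \emph{at least one} active large element (a rank-one matroid objective); the paper obtains it in Lemma~\ref{lem:adaplargecoeff} by truncating $\tstar$ at the first active large element (losing only a $1+\lambda$ factor, which needs the subtree property) and then invoking Theorem~\ref{thm:monotSubmod}. Second, in your low part the variance proxy ``$T\mu_i$'' is path-dependent: Freedman requires a uniform (or carefully peeled) bound on the predictable variation, individual root--leaf paths can have $\mu_i(P_\ell)$ far above the average, and the union-bound failure event of probability $1/\W$ may still carry a constant fraction of $\OPT$ because you have no uniform bound on path values. The paper's fix is again structural: truncate $\tstar$ wherever the collected value exceeds $2\OPT$ (Claim~\ref{clm:xos-trunc}), use the subtree property to show this retains $\Omega(\OPT_s)$, and only then are both the jump sizes (small elements, $h=\OPT/\lambda$, a threshold keyed to $\OPT$ rather than to $M$) and the quadratic variation uniformly bounded so that Freedman applies. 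Without these two truncations your inequality $V\le O(\log\W)(\alg+M)$ is not established.

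The lower-bound sketch also does not work. In a flat bucket construction the elements within a bucket are independent and the bucket value is additive, so probing part of a bucket conveys no information about the worth of its unprobed elements beyond the prior; the only adaptive power is budget reallocation across buckets. But a non-adaptive strategy that spreads the same cardinality budget evenly over the buckets already enjoys exactly the max-of-(near-)Poisson fluctuation you want to credit to the adaptive player --- ``locating the anomalously productive bucket'' requires probing it, which the spread non-adaptive strategy does too --- so the gap of such instances is $O(1)$, not $\Omega(\log\W/\log\log\W)$. The paper's construction instead has a hierarchical, conditional-reveal structure: a depth-$k$, $k$-ary tree with $k=\Theta(\log n/\log\log n)$, edges as elements, $p_e=1/k$, and $f(S)=\max_{\ell}|P_\ell\cap S|$ over root--leaf paths $P_\ell$ (width $\Theta(n)$). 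Adaptivity lets one follow the active edge level by level and collect $\Omega(k)$ with about $k^2$ probes, while any fixed set of $k^2$ probes is spread over exponentially many candidate paths and earns $O(1)$ under the path-touching constraint (or $O(\log k)$ under a pure cardinality budget). Some such structure, where which elements are worth probing next depends on earlier outcomes, appears essential and is missing from your construction.
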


In \S\ref{sec:nonadapalgo}, we also present an efficient non-adaptive
algorithm for XOS functions of width $\W$ that makes $O(\W+\log n)$ calls to the
following linear oracle.
\begin{definition}[Oracle $\OO$]
  Given a prefix-closed constraint family $\mathcal{F}$ and linear
  function $\C: X\to \R^+$, oracle $\OO(\mathcal{F}, \C)$ returns a set
  $S\in {\cal F}$ that maximizes $\sum_{e \in S} \C(e)$.
\end{definition}

\subsection{Adaptivity Gap Upper Bound}\label{subsec:xos-ub}

We first state a useful property that is used critically later.
\begin{assumption}[Subtree property]
  \label{defn:subtree} 
  For any node $u$ in the optimal adaptive strategy tree $\T^*$, if we
  consider the subtree $\T'$ rooted at $u$ then the expected value of
  $\T'$ is at most that of $\T^*$:
  \[
  \adap(\T', f) \leq \adap(\T^*, f), \text{ when $\T'$ is a subtree of
    $\T^*$}.
  \]
\end{assumption}
This is because otherwise a better strategy would be to go directly
to $u$ (probing all the element along the way, so that we satisfy the
prefix-closed constraint, but ignore these elements), and then to run
strategy $\T'$.

\paragraph{Proof Idea.}
The proof consists of three steps. In the first step we argue that one
can assume that every coefficient in every linear function $\C_i$ is
smaller than $O(\frac{\adap(\T^*, f)}{\log \W})$ (else there is a simple
non-adaptive strategy that is comparable to the adaptive value obtained
from a single active item). The second step shows that by losing a
constant factor, one can truncate the tree $\T^*$ to obtain tree $\T$,
where the instantiated value at each leaf is at most
$2\cdot \adap(\T^*, f)$. The combined benefit of these steps is to
ensure that root-leaf paths have neither high variance nor too large a
value. In the third step, we use Freedman's concentration inequality
(which requires the above properties of $\T$) to argue that for any
linear function $\C_i$, the instantiated value on a random root-leaf
path is close to its mean with high probability. Taking union bound over
the $\W$ linear functions, we can then show that (again with high
probability), no linear function has an instantiation much more than its
mean. Hence, for a random root-leaf path, $\adap$ gets value (the
maximum instantiation over linear functions) that is not much more than
the corresponding mean, which is a lower bound on the non-adaptive
value.

Below we use $\OPT=\adap(\T^*, f)$ to denote the optimal adaptive value. 

\paragraph{Small and large elements.}
Define $\lambda := 10^3\log \W$. An element $e\in X$ is called {\em
  large} if $\max_{i=1}^{\W} \C_i(e) \ge h := \frac{\OPT}{\lambda}$; it
is called {\em small} otherwise. Let $L$ be the set of large elements,
and let $\OPT_l$ (resp., $\OPT_s$) denote the value obtained by tree
$\T^*$ from large (resp., small) elements. By subadditivity, we have
$\OPT_l+\OPT_s\ge \OPT$.

Lemma~\ref{lem:adaplargecoeff} shows that that when $\OPT_l\ge \OPT/2$,
a simple non-adaptive strategy proves that the adaptivity gap is $O(\log
\W)$. Then Lemma~\ref{lem:adapsmallcoeff} shows that when $\OPT_s\ge
\OPT/2$, the adaptivity gap is $O(1)$. Choosing between the two by
flipping an unbiased coin gives a non-adaptive strategy that proves the
adaptivity gap is $O(\log W)$.  This would prove the first part of
Theorem~\ref{thm:xos}.  

\begin{lemma}\label{lem:adaplargecoeff}
  Assuming that $\OPT_l\ge \OPT/2$, there is a non-adaptive solution of
  value $\Omega(1/\log \W)\cdot \OPT$. 
Moreover, there is a solution $S$ satisfying the probing constraint with $h\cdot \min\{\sum_{e\in S\cap L} p_e,\,1\}\ge \frac{\OPT}{O(\log W)}$.  \end{lemma}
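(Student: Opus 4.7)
The plan leverages the subtree property (Assumption~\ref{defn:subtree}) and averaging over a random root-leaf path of $\T^*$. Write $w(e) := \max_i \C_i(e)$, so that $f(T) \leq w(T) := \sum_{e \in T} w(e)$, and let $P_e := \Pr_{\ell \sim \pi_{\T^*}}[e \in \elt(P_\ell)]$, so $\Pr[e \in A_\ell] = P_e \, p_e$. For every node $v$ of $\T^*$ with $\elt(v) = e \in L$, monotonicity gives $\adap(\T^*_v, f) \geq p_e \cdot f(\{e\}) = p_e w(e)$ (selecting $\{e\}$ whenever it is active is feasible), while the subtree property gives $\adap(\T^*_v, f) \leq \OPT$; hence $p_e w(e) \leq \OPT$ at every large-element node. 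Setting $V := \sum_{e \in L} P_e p_e w(e) = \E_\ell[w(A_\ell \cap L)]$ and using $f \leq w$, we have $V \geq \E_\ell[f(A_\ell \cap L)] = \OPT_l \geq \OPT/2$.

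\textbf{Concentrated case.} Some node $v^*$ with $\elt(v^*) = e^*$ satisfies $p_{e^*} w(e^*) \geq h$. Take $S$ to be a root-leaf path of $\T^*$ through $v^*$, which is valid by prefix-closure; then the non-adaptive value of probing $S$ is at least $p_{e^*} f(\{e^*\}) = p_{e^*} w(e^*) \geq h = \OPT/\lambda = \Omega(\OPT/\log W)$, proving the main statement. For the ``moreover'' form, one chooses $S$ to additionally include enough large elements from $\T^*$ so that $\sum_{e \in S \cap L} p_e \geq \Omega(1)$, which combined with $h = \Omega(\OPT/\log W)$ yields $h \cdot \min\{\sum_{e \in S \cap L} p_e, 1\} \geq \Omega(\OPT/\log W)$.

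\textbf{Spread case.} Every large $e$ at every node satisfies $p_e w(e) < h$. Term-wise this gives $V \leq h \sum_e P_e$, so $\sum_e P_e \geq V/h \geq \lambda/2 = \Omega(\log W)$: a random root-leaf path probes $\Omega(\log W)$ large elements in expectation. For any probing sequence $P$, since any active large element yields $f \geq h$ by monotonicity,
\begin{gather*}
\E_R[f(R \cap P)] \;\geq\; h \cdot \Pr[R \cap P \cap L \neq \emptyset] \;\geq\; \tfrac{h}{2}\min\bigl\{1,\; \textstyle\sum_{e \in P \cap L} p_e\bigr\},
\end{gather*}
using $1-\prod(1-p_e) \geq \min\{1, \sum p_e\}/2$. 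Thus it suffices to exhibit a path $P$ with $\sum_{e \in P \cap L} p_e \geq \Omega(1)$. Bucket the large elements by $w(e)$ dyadically, $B_j := \{e \in L : w(e) \in [2^j h, 2^{j+1} h)\}$, and set $V_j := \sum_{B_j} P_e p_e w(e)$. Since $w(e) < 2^{j+1} h$ on $B_j$, the inequality $V_j \leq 2^{j+1} h \sum_{B_j} P_e p_e$ yields $\sum_{B_j} P_e p_e \geq V_j/(2^{j+1} h)$; summing over $j \leq \log_2 \lambda$ (for which $2^{j+1} h \leq 2\lambda h$) gives $\sum_e P_e p_e \geq V_{\text{low}}/(2\lambda h)$, where $V_{\text{low}}$ aggregates the contribution from those buckets. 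If $V_{\text{low}} \geq V/2$ this is $\Omega(1)$, and averaging over $\ell$ produces the required path. In the complementary subcase $V_{\text{high}} \geq V/2$, all contribution comes from elements with $w(e) > \lambda h = \OPT$, so a single such activation already yields $f \geq \OPT$; counting $\sum_{e : w(e) > \lambda h} P_e \geq V_{\text{high}}/h \geq \lambda/4$ and applying the ``at least one active'' bound on a judiciously chosen path then recovers value $\Omega(\OPT/\log W)$.

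The main obstacle is the spread-case bookkeeping---in particular, controlling the very high-weight, very low-probability regime so that either its contribution to $V$ is a bounded fraction (making the main bucketing work) or, when it dominates, its per-element value $w(e) > \OPT$ is exploited to recover the $\Omega(\OPT/\log W)$ guarantee without picking up extra logarithmic factors.
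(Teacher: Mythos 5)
Your concentrated case and your low-bucket spread subcase do give the first conclusion, but the proposal has genuine gaps exactly where you flag ``the main obstacle,'' and they are not mere bookkeeping. First, the ``moreover'' clause in the concentrated case is only asserted: you say one can ``additionally include enough large elements from $\T^*$ so that $\sum_{e\in S\cap L}p_e\geq\Omega(1)$,'' but nothing in your argument (or in the hypothesis $\OPT_l\ge\OPT/2$) produces a feasible probe sequence carrying constant large-element probability mass; when the concentration comes from a single node with $w(e^*)\gg\OPT$ and $p_{e^*}$ tiny, the path through $v^*$ has essentially no such mass and you offer no substitute. Second, in the high-weight spread subcase the quantity you control, $\sum_{e\in H}P_e\ge\lambda/4$, counts the expected \emph{number} of high elements on a random path, not their probability mass; since those elements can have arbitrarily small $p_e$ (compensated by $w(e)\gg\OPT$), no path need satisfy $\sum_{e\in P\cap L}p_e=\Omega(1)$, so the ``$h\cdot\Pr[\text{at least one active}]$'' bound you invoke does not deliver $\Omega(\OPT/\log \W)$. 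Rescuing even the first conclusion there requires a value-weighted estimate, e.g.\ $\E[f(R\cap P)]\ge\bigl(1-\sum_{e\in P\cap H}p_e\bigr)\sum_{e\in P\cap H}p_e w(e)$ combined with averaging $V_{\mathrm{high}}$ over paths, which your sketch does not supply; and the ``moreover'' form is not addressed in that subcase at all.

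The paper takes a different route, and that route is the missing idea. It does no case analysis on element weights: it truncates $\T^*$ at the first active large element on each branch, awards value $h$ there, and uses the subtree property (Assumption~\ref{defn:subtree}) to argue the truncated tree retains an $\Omega(1/\lambda)$ fraction of $\OPT_l$. The truncated tree is then viewed as an adaptive strategy for the surrogate objective $g(R)=h\cdot\min\{|R\cap L|,1\}$, a scaled rank-one matroid rank function, so Theorem~\ref{thm:monotSubmod} (adaptivity gap $3$ for monotone submodular objectives, under the same probing constraint) produces a single non-adaptive feasible set $S$ with $\E_{R\sim S(\p)}[g(R)]\ge\OPT/O(\log \W)$. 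Both conclusions then come from this one $S$: $f\ge g$ on subsets of $L$ gives the value bound, and $\E[g(R)]=h\cdot\Pr[R\cap L\neq\emptyset]\le h\cdot\min\{\sum_{e\in S\cap L}p_e,\,1\}$ gives the probability-mass form. In other words, the ``moreover'' statement is by design a statement about the $g$-value of a non-adaptive solution handed to you by the submodular adaptivity-gap theorem, not something extracted from a single root-leaf path by dyadic bucketing; that is precisely why your path-averaging attack stalls in the heavy-tailed (huge $w(e)$, tiny $p_e$) regime.
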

\begin{proof}
  We restrict the optimal tree $\T^*$ to the large elements. So each node in
  $\T^*$ either contains a large element, or corresponds to making a
  random choice (and adds no value). The expected value of this
  restricted tree is $\OPT_l$.  We now truncate $\T^*$ to obtain tree
  $\overline{\T}^*$ as follows. Consider the first active node $u$ on
  any root-leaf path, remove the subtree below the \yes (active) arc
  from $u$, and assign exactly a value of $h$ to this instantiation.
  The subtree property (Assumption~\ref{defn:subtree}) implies that the
  expected value in this subtree below $u$ is at most $\OPT$. On the
  other hand, just before the truncation at $u$, the adaptive strategy
  gains value of $h = \frac{\OPT}{\lambda}$ since it observed an active
  large element at node $u$. By taking expectations, we obtain that the
  value of $\overline{\T}^*$ is at least $\frac{1}{1 + \lambda} \cdot
  \OPT_l$.

  Note that $\overline{\T}^*$ is a simpler adaptive strategy. In fact
  $\overline{\T}^*$ is a feasible solution to the stochastic probing
  instance with the probing constraint ${\cal F}$ and a different
  objective $g(R) = h\cdot \min\{|R \cap L|,1\}$ which is the rank
  function of the uniform-matroid of rank~1 (scaled by $h$) over all
  large elements. As any matroid rank function is a monotone submodular
  function, Theorem~\ref{thm:monotSubmod} implies that there is a
  non-adaptive strategy which probes a feasible sequence of elements $S
  \in {\cal F}$, having value $\E_{R\sim S(\p)} [g(R)] \geq \nicefrac13
  \cdot \adap(\overline{\T}^*, g) \geq \nicefrac13 \cdot
  \frac{\OPT_l}{1+\lambda}$. Note that for any subset $R \sse L$ of
  large elements $f(R)\ge \max_{e\in R} \{ \max_{i=1}^{\W} \C_i(e)\} \ge
  h \cdot \min\{|R|,1\} = g(R)$; the first inequality is by monotonicity
  of $f$ and the second is by definition of large elements.  So we have:
  \[ \E_{R\sim S(\p)} [f(R)] \ge \E_{R\sim S(\p)} [g(R)] = \Omega(1/\log
  \W)\cdot \OPT_l. \] It follows that $S$ is the claimed non-adaptive
  solution for the original instance with objective $f$.
  
  We now show the second part of the lemma using the above solution $S$. Note that 
  $$\frac{\OPT}{O(\log W)} = \E_{R\sim S(\p)} [g(R)] = h\cdot \E_{R\sim S(\p)} [\mathbf{1}(R\cap L\ne \emptyset)] \le h\cdot \min\left\{\sum_{e\in S\cap L} p_e,\,1 \right\},$$
  as desired. 
\end{proof}

In the rest of this section we prove the following, which implies an
$O(\log \W)$ adaptivity gap.
\begin{lemma}
  \label{lem:adapsmallcoeff}
  Assuming that $\OPT_s\ge \OPT/2$, there is a non-adaptive solution of
  value $\Omega(1)\cdot \OPT$.
\end{lemma}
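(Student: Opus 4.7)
The plan is: (a) restrict the function to small elements at a factor-$2$ loss; (b) truncate $\T^*$ into a tree $\T$ so that the ``expected linear value'' $q_i(P) := \sum_{v \in P} p_{e_v}\C_i(e_v)$ on every root-leaf path $P$ is bounded; (c) apply Freedman's martingale inequality together with a union bound over the $\W$ linear functions to show that $\C_i(A_\ell)$ concentrates around $q_i(P_\ell)$; and (d) conclude that the randomized non-adaptive strategy that probes a random root-leaf path of $\T$ has expected value $\Omega(\OPT)$.

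\emph{Reduction and truncation.} Replace $f$ by $f_s(S) := f(S \cap \text{small})$; this remains XOS of width at most $\W$ with every coefficient bounded by $h := \OPT/\lambda$, and $\adap(\T^*, f_s) \geq \OPT_s \geq \OPT/2$ by hypothesis. For each vertex $v$ in $\T^*$ and each $i$, set $q_i(v) := \sum_{v' \text{ on the root-}v \text{ path}} p_{e_{v'}}\C_i(e_{v'})$, and prune $\T^*$ by declaring $v$ a leaf the first time $\max_i q_i(v) > 4\OPT$; call the resulting tree $\T$. Since $\max_i q_i(P) = \max_i \E_{R\sim P(\p)}[\C_i(R)] \leq \E_R[f_s(R)]$ for any path $P$, Markov's inequality bounds the truncation probability by $\alg(\T^*, f_s)/(4\OPT) \leq 1/4$; combined with the subtree property (Assumption~\ref{defn:subtree}) applied to $f$, together with the bound $\adap(\T'_u, f_s) \leq \adap(\T'_u, f) \leq \OPT$ that uses $f_s \leq f$, the expected value lost to truncation is at most $\OPT/4$, so $\adap(\T, f_s) \geq \OPT/4$. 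Every root-leaf path $P$ in $\T$ now satisfies $\max_i q_i(P) \leq 4\OPT + h \leq 5\OPT$.

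\emph{Freedman concentration.} For a random walk $\ell \sim \pi_\T$ and each $i$, decompose $X_i := \C_i(A_\ell) = q_i(P_\ell) + \sum_k D_k$, where $D_k := \C_i(e_{v_k})(\mathcal{E}_k - p_{e_{v_k}})$ are the centered increments along the walk. These satisfy $|D_k| \leq h$, and the truncation step yields the almost-sure bound $\sum_k \text{Var}(D_k \mid \mathcal{F}_{k-1}) \leq h \cdot q_i(P_\ell) \leq 5h\OPT = 5\OPT^2/\lambda$. Freedman's inequality then gives $\Pr[X_i - q_i(P_\ell) > c\OPT] \leq \exp(-c^2 \lambda / (10 + 2c/3))$; with $\lambda = 10^3 \log \W$ and a sufficiently small constant $c$, a union bound over the $\W$ linear functions yields $\Pr[\max_i(X_i - q_i(P_\ell)) > c\OPT] \leq \W^{-\Omega(1)}$.

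\emph{Conclusion and main obstacle.} Since $\max_i X_i = f_s(A_\ell) \leq 5\OPT$ deterministically,
\[
  \OPT/4 \leq \adap(\T, f_s) = \E[\max_i X_i] \leq \E_\ell[\max_i q_i(P_\ell)] + c\OPT + o(\OPT),
\]
so $\E_\ell[\max_i q_i(P_\ell)] \geq (1/4 - c)\OPT - o(\OPT) = \Omega(\OPT)$ for $c$ small. The randomized non-adaptive strategy probing $\elt(P_\ell)$ for $\ell \sim \pi_\T$ then has expected value at least $\E_\ell \E_R[\max_i \C_i(R \cap \elt(P_\ell))] \geq \E_\ell[\max_i q_i(P_\ell)] = \Omega(\OPT)$, and by averaging some fixed feasible probing sequence attains this. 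The hard part will be obtaining the almost-sure upper bound on the sum of conditional variances required by Freedman: naive activation-based truncation only bounds $f_s(A_\ell)$, not the $q_i(P_\ell)$ that actually appears in the variance, which is why the truncation is done on the structural quantity $\max_i q_i$, and why the threshold must be chosen so that the constant fraction of $\OPT$ surviving in $\adap(\T, f_s)$ strictly exceeds Freedman's deviation $c\OPT$.
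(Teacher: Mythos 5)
Your overall architecture matches the paper's: restrict to the small elements, truncate $\T^*$ using the subtree property, apply Freedman's inequality with a union bound over the $\W$ linear functions, and output a random root-leaf path of the truncated tree as the non-adaptive strategy. The one genuinely different ingredient is the truncation rule: the paper cuts at the first node where the \emph{realized} value $f(A_u)$ exceeds $2\,\OPT$ (arguing that at that moment the strategy has already collected $2\,\OPT$ while the discarded subtree is worth at most $\OPT$), whereas you cut when the \emph{expected} path weight $\max_i q_i$ exceeds $4\,\OPT$ and control the truncation probability by Markov via $\alg(\T^*,f_s)\le \OPT$. That part of your argument is sound, and it has the virtue of delivering the almost-sure conditional-variance bound $\sum_k \mathrm{Var}(D_k\mid \mathcal{F}_{k-1})\le h\cdot q_i(P_\ell)\le 5\,\OPT^2/\lambda$ in exactly the form Freedman requires.

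However, there is a genuine error where you close the argument: the assertion that ``$\max_i X_i = f_s(A_\ell)\le 5\,\OPT$ deterministically.'' Your truncation bounds $q_i(P_\ell)=\sum_{v\in P_\ell}p_{e_v}\C_i(e_v)$, not the instantiated sum $\C_i(A_\ell)$; on a path of many low-probability elements every $q_i$ can stay far below $4\,\OPT$ while the realized value, on a rare activation pattern, exceeds any constant multiple of $\OPT$, so no almost-sure bound holds. Since this bound is what you invoke to control the failure event in $\E[\max_i X_i]\le \E_\ell[\max_i q_i(P_\ell)]+c\,\OPT+o(\OPT)$, the final chain does not go through as written. It is fixable: either integrate Freedman's tail, bounding $\E[\max_i (X_i-q_i(P_\ell))^+]\le c\,\OPT+\W\int_{c\,\OPT}^\infty 2\exp\left(-\Omega(t\lambda/\OPT)\right)dt = c\,\OPT+o(\OPT)$, or impose a second truncation on the realized value (as the paper does), paying another additive fraction of $\OPT$ via the subtree property, so that both the variance and the supremum are controlled. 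Two smaller points: the deviation constant $c$ must be chosen \emph{large enough} for the union bound over $\W$ functions to bite given $\lambda=10^3\log \W$ (taking $c$ ``sufficiently small'' goes the wrong way), and constant-size $\W$ needs the trivial separate argument (the best single linear function already loses only a factor $\W$), as in the paper's treatment of $\W<20$.
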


\begin{proof}
  We start with the restriction of the optimal tree $\T^*$ to the small
  elements; recall that $\OPT_s$ is the expected value of this
  restricted tree.
  The next step is to truncate tree $\T^*$ to yet another tree \T with
  further useful properties. For any root-leaf path in $\T^*$ drop the
  subtree below the first node $u$ (including $u$) where $f(A_u) > 2
  \cdot \OPT$; here $A_u$ denotes the set of active elements on the path
  from the root to $u$. The subtree property
  (Assumption~\ref{defn:subtree}) implies that the expected value in the
  subtree below $u$ is at most $\OPT$. On the other hand, before the
  truncation at $u$, the adaptive value obtained is more that $2\cdot
  \OPT$. Hence, the expected value of $\T^*$ obtained at or above the
  truncated nodes is at least $\frac23\cdot \OPT_s$. Finally, since all
  elements are small and thus the expected value from any truncated node
  itself is at most $h\le 0.01\cdot \OPT$, the tree $\T$ has at least
  $(\frac23 - 0.01) \OPT_s \geq \frac12 \OPT_s$ value. This implies the
  next claim:
  \begin{claim}
    \label{clm:xos-trunc}
    Tree \T has expected value at least $\frac12\cdot \OPT_s\ge
    \frac14\cdot \OPT $ and $\max_{\ell\in \T} \max_{i=1}^{\W} \{
    \C_i(P_\ell)\} \le 2\cdot \OPT$.
  \end{claim}

  Next, we want to claim that each linear function behaves like its
  expectation (with high probability) on a random path down the tree. 
  For any $i\in [\W]$ and root-leaf path $P_\ell$ in $\T$, define
  \[ \mu_i(P_\ell) := \E_{R\sim X(\p)}[\C_i(R \cap P_\ell)] = \sum_{v \in
    P_\ell} \left( p_{\elt(v)}\cdot \C_i(\elt(v)) \right).
  \]

\begin{claim}\label{clm:concxos}
  For any $i \in [\W]$,
  \begin{gather}
    \Pr_{\ell \gets \pi_{\T}}\left[|\C_i(A_\ell) - \mu_i(P_\ell)| > 0.1
      ~\OPT \right] \leq \frac{1}{\W^2}.
  \end{gather}
\end{claim}
\begin{proof}
  Our main tool in this proof is the following concentration inequality
  for martingales.
  \begin{theorem}[Freedman, Theorem 1.6
    in~\cite{Freedman}]\label{thm:freedman}
    Consider a real-valued martingale sequence $\{X_t\}_{t \geq 0}$ such
    that $X_0 = 0$, and $\E \left[ X_{t+1} \mid X_t, X_{t-1}, \ldots,
      X_0 \right] = 0$ for all $t$. Assume that the sequence is
    uniformly bounded, i.e., $| X_t | \leq M$ almost surely for all
    $t$. Now define the predictable quadratic variation process of the
    martingale to be
    $W_t = \sum_{j=0}^{t} \E \left[ X^2_{j} \, \mid \, X_{j-1}, X_{j-2},
      \ldots, X_0 \right]$
    for all $t \geq 1$. Then for all $\ell \geq 0$ and $\sigma^2 > 0$,
    and any stopping time $\tau$ we have
    \[
    \Pr \left[ \big|\sum_{j=0}^\tau X_{j}\big| \geq \ell \, \, and \, \, W_{\tau}
      \leq \sigma^2 \right] \quad\leq \quad 2
    \exp\left(-\frac{\ell^2/2}{\sigma^2 + M\ell/3} \right). \]
  \end{theorem}

  Consider a random root-leaf path $P_\ell = \langle r=v_0,v_1,\ldots,
  v_\tau=\ell \rangle$ in $\T$, and let $e_t = \elt(v_t)$. Now define a
  sequence of random variables $X_0, X_1, \ldots,$ where
  \[ X_t = \left( \one{e_t \in A} - p_{e_t}\right) \cdot \C_i(e_t).
  \]
  Let $\calH_t$ be a filter denoting the sequence of variables before
  $X_t$. Observe that $\E[X_t \mid \calH_t] = 0$, which implies $\{X_t
  \}$ forms a martingale. Clearly $|X_t| \leq |\C_i(e_t)| \leq h$. Now,
  \begin{align*}
    \sum_{j=0}^{t} \E \left[ |X_{j}| \, \mid \, \calH_j \right] &\leq
    \sum_{j=0}^{t} \left( p_{e_t}(1-p_{e_t}) + (1-p_{e_t})p_{e_t}\right)
    \cdot \C_i(e_t) \\ &\leq \frac12 \sum_{j=0}^{t} \C_i(e_t) \le
    \frac12 \cdot \max_\ell \max_{i=1}^{\W} \C_i(P_\ell) \leq {\OPT},
  \end{align*}
  where the last inequality is by Claim~\ref{clm:xos-trunc}.  We use
  $|X_j| \leq h = \frac{\OPT}{\lambda}$ and the above equation to bound
  the variance,
  \[
  \sum_{j=0}^{t} \E \left[ X^2_{j} \, \mid \, \calH_j \right] \leq
  h\cdot \sum_{j=0}^{t} \E \left[ |X_{j}| \, \mid \, \calH_j \right]
  \leq \frac{ \OPT^2}{\lambda}.
  \]
  Applying Theorem~\ref{thm:freedman}, we get
  \begin{align*}
    \Pr\left[ \big| \sum_{j=0}^\tau X_j \big| > 0.1 ~\OPT \right] &= \Pr[ |\C_i(A_\ell) -
    \mu_i(P_\ell)| >  0.1 ~\OPT ]\\
    &\leq 2 \exp\left(-\frac{(0.1 ~\OPT)^2/2}{ \OPT^2/ (\lambda) \,+\,
        (\OPT/ (\lambda)) \cdot (0.1 ~\OPT)/3} \right) \\
    &\leq \frac{1}{{\W}^2}.
  \end{align*}
  This completes the proof of Claim~\ref{clm:concxos}. 
\end{proof}

Now we can finish the proof of Lemma~\ref{lem:adapsmallcoeff}.  We label
every leaf $\ell$ in $\T$ according to the linear function $\C_i$ that
achieves the value $f(A_\ell)$, breaking ties arbitrarily. I.e., for
leaf $\ell$ we define
\[ c^{\max}_\ell := \C_i, \text{ where } \C_i(A_\ell) = f(A_\ell).
\]
Also define $\mu^{\max}_\ell := \mu_i$ for $i$ as above.
Using Claim~\ref{clm:concxos} and taking a union bound over all $i\in
[\W]$, 
\begin{equation}
  \label{eq:xos-inst-mean}
  \Pr_{\ell\gets \pi_\T} \bigg[\big| c^{\max}_\ell (A_\ell) -
  \mu^{max}_\ell (P_\ell) \big| > 0.1 ~\OPT \bigg] ~~\le~~ \frac1{\W}.
\end{equation}
Consider the natural non-adaptive solution which selects $\ell\gets
\pi_\T$ and probes all elements in $P_\ell$. This has expected value at
least:
\begin{gather*}
  \E_{\ell\gets \pi_\T} \left[\mu^{max}_\ell (P_\ell)\right]
  ~~\stackrel{\small (\ref{eq:xos-inst-mean})}{\ge}~~ \E_{\ell\gets
    \pi_\T} \left[c^{max}_\ell (A_\ell)\right] - 0.1 ~\OPT -
  \nicefrac{1}{\W}(2~\OPT) ~~\stackrel{\small
    (\text{Claim~\ref{clm:xos-trunc}})}{\ge}~~ (0.15-\nicefrac{2}{\W})\cdot \OPT.
\end{gather*}
If $\W\ge 20$ then we obtain the desired non-adaptive strategy. The
remaining case of $\W<20$ is trivial: the adaptivity gap is~$1$ for a
single linear function, and taking the best non-adaptive solution among
the $\W$ possibilities has value at least $\frac{1}{\W}\cdot
\OPT$. This completes the proof of Lemma~\ref{lem:adapsmallcoeff}.
\end{proof}

Let us record an observation that will be useful for the non-adaptive
algorithm. 
\begin{remark} 
  \label{rem:smallcoeff}
  Observe that the above proof shows that when $\OPT_s\ge \OPT/2$, 
  there exists a path $Q$ in $\T^*$  (i.e. $Q$ satisfies the probing constraints) and a linear function $\C_j$ with mean
  value $\E_{R\sim Q(\p)} [\C_j(R)] = \Omega(\OPT)$.
\end{remark}

\subsection{Polynomial Time Non-adaptive Algorithm } \label{sec:nonadapalgo}

Consider any instance of the stochastic probing problem with a width-$W$ monotone XOS objective and prefix-closed constraint ${\cal F}$. 
Our non-adaptive algorithm is the following (here $\lambda=10^3\log W$ is as in \S\ref{subsec:xos-ub}).

\begin{algorithm} [h!]
\begin{algorithmic}[1]
  \caption{Non-adaptive Algorithm for XOS functions}
  \State \textbf{define} $ m := \max_{e \in X} \{ p_e \cdot \max_{i\in [\W]} \C_i(e) \}$
  \For {$j \in \{0, \dots, 1+\log n\}$}
     \State \textbf{define} $\mathbf{b}_j$ with $\mathbf{b}_j (e) = p_e$ if
         $\max_{i\in [\W]} \{\C_i(e)\} \geq \frac{2^j m}{\lambda}$ and $\mathbf{b}_j (e) = 0$ otherwise. 
     \State  $T_j \gets \OO(\mathcal{F},\mathbf{b}_j)$ and $v(T_j)\gets \frac{2^j m}{\lambda}\cdot \min\{\mathbf{b}_j (T_j),1 \}$.
  \EndFor
  \For {$i \in \{1, \dots, \W\}$}
     \State \textbf{define} $\mathbf{c}_i$ with $\mathbf{c}_i (e) = p_e\cdot \C_i(e)$ 
          \State  $S_i \gets \OO(\mathcal{F},\mathbf{c}_i)$ and $v(S_i) \gets \mathbf{c}_i(S_i)$.
  \EndFor
  \State \Return set $S \in \{S_1, \dots, S_W, T_0, T_1, \dots,
  T_{1+\log n} \} $ that maximizes $v(S)$. 
\end{algorithmic}
\end{algorithm}

\paragraph{Case I: $\OPT_l \geq \OPT/2$.}
Lemma~\ref{lem:adaplargecoeff} shows that in this case it suffices to
consider only the set of large elements and to maximize the probability
of selecting a single large element. While we do not know $\OPT$, and
the large elements are defined in terms of $\OPT$, we do know $m =
\max_{e \in X} \{ p_e \cdot \max_{i\in [\W]} \C_i(e) \} \leq \OPT \leq
n\cdot m$.  In the above algorithm, consider the value of $j \in \{0,
\dots, 1+\log n\}$ when $2^j \cdot m/\lambda$ is between $h$ and
$2h$. Let $L$ denote the set of large elements; note that these
correspond to the elements with positive $\mathbf{b}_j(e)$ values.  By
the second part of Lemma~\ref{lem:adaplargecoeff}, the solution $T_{j}$
returned by the oracle will satisfy $v(T_j) \geq \OPT/ O(\log \W)$. Now
interpreting this solution $T_j$ as a non-adaptive solution, we get an
expected value at least:
\begin{align*}
&h\cdot \E_{R\sim T_j(\p)}[\mathbf{1}(R\cap L\ne \emptyset)] \quad = \quad h\cdot \left(1 - \Pi_{e\in T_j} (1-\mathbf{b}_j(e))\right) \quad \ge \quad h\cdot \left(1 - e^{-\mathbf{b}_j(T_j)}\right) \\
&\ge (1-1/e)h\cdot \min\{\mathbf{b}_j(T_j),1 \} \quad = \quad (1-1/e) \cdot v(T_j) \quad \ge \quad \frac{\OPT}{O(\log W)}
\end{align*}

\paragraph{Case II: $\OPT_s \geq \OPT/2$.}
In this case Remark~\ref{rem:smallcoeff} following the proof of
Lemma~\ref{lem:adapsmallcoeff} shows that  there exists a solution $Q$ satisfying the probing constraints ${\cal F}$ and a linear function $\C_j$ with mean
value $\mathbf{c}_j(Q) = \E_{R\sim Q(\p)} [\C_j(R)] = \Omega(\OPT)$. Since the above algorithm calls
$\OO(\mathcal{F},\mathbf{c}_i)$ for each $i\in [W]$ and chooses the best one, it will return a set with value $\Omega(\OPT)$.

\ignore{ Finally, except the last step, the above algorithm is $O(n\W \log nW)$
time because it only makes $O(\W + \log n)$ calls to the oracle
$\OO$. To make the last step polynomial, we note that rather than
computing the set $S$ that maximizes $ \E_{R\sim X(\p)} [f(S \cap R)]$,
it suffices to compute constant approximations for $\E_{R\sim
  X(\p)} [f(S \cap R)]$ for each $S$, and then choosing the one with the
highest estimate. 

To do this efficiently, we partition all the elements in $X$ into two
sets: set~$A$ contains elements with $p_e \geq 1/n^2$, and set~$B$
contains elements with $p_e < 1/n^2$. Since $\E_{R\sim X(\p)} [f(S \cap
R)] \leq \E_{R\sim X(\p)} [f(S \cap R \cap A)]$ + $\E_{R\sim X(\p)} [f(S
\cap R \cap B)]$, we can obtain an estimate by estimating each of them
separately, and randomly using one of them, i.e., each w.p.\ $\nicefrac12$.

For the elements in $A$, those with ``high'' probability, note that
$\E_{R\sim X(\p)} [f(S \cap R \cap A)] \geq \frac{1}{n^2} \max_{e\in
  S\cap A} f(e)$, and the r.v.\ $f(S \cap R \cap A)$ takes on values in
the range $[0, n \cdot \max_{e\in S\cap A}f(e)]$. Hence,  sampling
$\poly(n \log W)$ times and taking the sample mean would give a good
estimate of $\E_{R\sim X(\p)} [f(S \cap R \cap A)]$ with probability
$1 - 1/(2W)$, say. For the elements in $B$, i.e., those with ``tiny''
probability, the probability of two elements simultaneously being active
is smaller than $1/n$. So by losing a factor of $1 - o(1)$, we can
estimate $\E_{R\sim X(\p)} [f(S \cap R \cap B)]$ by considering a
setting where we are allowed to pick at most one active element.  This
reduces the problem to just picking a single element in $B$, which can
be solved with a single oracle call to $ \OO(\mathcal{F},\C_0)$, where
$\C_0(e) = p_e \cdot f(e) $ if $e\in B$, and $0$ otherwise.

}


\subsection{Adaptivity Gap Lower Bound}
\label{sec:xos-lower-bd}

Consider a $k$-ary tree of depth $k$, whose edges are the ground set.
Each edge/element has probability $p_e = \frac1k$. Here, imagine $k =
\Theta(\frac{\log n}{\log \log n})$, so that the total number of edges
is $\sum_{i = 1}^k k^i = n$. For each of the $k^k$ leaves $l$, consider
the path $P_l$ from the root to that leaf. The XOS function is $f(S) :=
\max_l |P_l \cap S|$. Note that the width $\W = \Theta(n)$ in this case.

Suppose the probing constraint is the following prefix-closed
constraint: there exists a root-leaf path $P_l$ such that all probed
edges have at least one endpoint on this path. This implies that we can
probe at most $k^2$ edges.
\begin{itemize}
\item For an adaptive strategy, probe the $k$ edges incident to the
  root. If any one of these happens to be active, start probing the $k$
  edges at the next level below that edge. (If none were active, start
  probing the edges below the left-most child, say.)  Each level will
  have at least one active edge with probability $1- (1 - \frac1k)^k
  \geq 1 - 1/e$, so we will get an expected value of $\Omega(k)$.
\item Now consider any non-adaptive strategy: it is specified by the
  path $P_l$ whose vertices hit every edge that is probed. There are
  $k^2$ such edges, we can probe all of them. But the XOS function can
  get at most $1$ from an edge not on $P_l$, and it will get at most $k
  \cdot 1/k = 1$ in expectation from the edges on $P_l$.
\end{itemize}
This shows a gap of $\Omega(k) = \Omega(\frac{\log n}{\log \log n})$ for
XOS functions with a prefix-closed (in fact subset-closed) probing
constraint. 

\subsubsection{A Lower Bound for Cardinality Constraints}

We can show a near-logarithmic lower bound for XOS functions even for
the most simple cardinality constraints. The setup is the same as above,
just the constraint is that a subset of at most $k^2$ edges can be probed.

\begin{itemize}
\item The adaptive strategy remains the same, with expected value
  $\Omega(k)$.
\item We claim that any non-adaptive strategy gets expected value
  $O(\log k)$. Such a non-adaptive strategy can fix any set $S$ of $k^2$
  edges to probe. For each of these edges, choose an arbitrary root-leaf
  path passing through it, let $T$ be the edges lying in these $k^2$
  many root-leaf paths of length $k$. So $|T| \leq k^3$. Let us even
  allow the strategy to probe all the edges in $T$---clearly this is an
  upper bound on the non-adaptive value.

  The main claim is that the expected value to be maximized when $T$
  consists of $k^2$ many disjoint paths. (The $k$-ary tree does not have
  these many disjoint paths, but this is just a thought-experiment.) The
  claim follows from an inductive application of the following simple fact.

  \begin{fact}
    Given independent non negative random variables $X,X',Y,Z$, where
    $X'$ and $X$ have the same distribution, the following holds: 
    \[ \E_{X,Y,Z} [\max\{X + Y, X + Z\}] \leq \E_{X,X',Y,Z} [\max\{X +
    Y, X' + Z\}]. \]
  \end{fact}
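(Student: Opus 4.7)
The plan is to reduce the stated inequality to a single application of Jensen's inequality. First I would rewrite both sides by extracting common summands. On the left, $\max(X+Y, X+Z) = X + \max(Y,Z) = X + Y + (Z-Y)^+$, where $(\cdot)^+ := \max(\cdot,0)$. On the right,
\[ \max(X+Y, X'+Z) = X + Y + (X' - X + Z - Y)^+. \]
Subtracting $\E[X+Y]$ from both sides (and recalling that $\E[X] = \E[X']$), the claim becomes equivalent to
\[ \E\bigl[(D + W)^+\bigr] \;\ge\; \E\bigl[W^+\bigr], \qquad \text{where } D := X' - X,\ W := Z - Y. \]

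Second, I would record the two structural properties that make this inequality trivial. Because $X$ and $X'$ are i.i.d., $\E[D] = 0$. Because $(X,X')$ is independent of $(Y,Z)$, the random variable $D$ is independent of $W$. Then, since the function $t \mapsto t^+$ is convex, Jensen's inequality applied conditionally on $W$ gives
\[ \E\bigl[(D + W)^+ \,\big|\, W\bigr] \;\ge\; \bigl(\E[D + W \mid W]\bigr)^+ \;=\; (W + 0)^+ \;=\; W^+. \]
Taking expectations over $W$ yields exactly $\E[(D+W)^+] \ge \E[W^+]$, which as noted above is equivalent to the Fact.

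There is essentially no hard step; the main ``obstacle'' is simply spotting the right rearrangement that exposes the convex-function-of-a-mean-zero-perturbation structure. I would also observe that non-negativity of $X, X', Y, Z$ is never used, so the Fact holds for any integrable independent random variables with $X \stackrel{d}{=} X'$. This is exactly what is needed to iteratively peel off a shared random variable (the ``shared prefix'' $X$ representing edges on the common part of two root--leaf paths) and replace it with an independent copy, reducing the maximum over $k^2$ intersecting root--leaf paths in the $k$-ary tree to the maximum over $k^2$ disjoint length-$k$ paths as claimed in the lower bound argument.
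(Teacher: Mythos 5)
Your proof is correct, and it takes a genuinely different route from the paper's. The paper argues via stochastic dominance: for every threshold $c$, it asserts that the event $\{\max\{X+Y,\,X+Z\} > c\}$ is contained in $\{\max\{X+Y,\,X'+Z\} > c\}$, and then compares expectations by integrating tail probabilities. (Strictly speaking this is a tail-probability inequality rather than a pointwise event inclusion as written --- conditioning on $(Y,Z)$ and assuming WLOG $Y \le Z$, the left tail is $\Pr[X > c-Z]$ while the right tail is $\Pr[X > c-Y] + \Pr[X>c-Z] - \Pr[X>c-Y]\Pr[X>c-Z] \ge \Pr[X>c-Z]$; the paper's one-line phrasing glosses over this.) You instead rewrite $\max(a,b) = a + (b-a)^+$, cancel the common $\E[X+Y]$, and reduce the claim to $\E[(D+W)^+] \ge \E[W^+]$ where $D := X'-X$ is mean-zero and independent of $W := Z-Y$; this is a single conditional application of Jensen to the convex hinge $t \mapsto t^+$. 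Your route is more elementary and, as a bonus, makes it transparent that non-negativity of the variables is never used --- only integrability, independence, and $X \stackrel{d}{=} X'$ --- which is the level of generality actually needed to iterate the peeling argument in the lower-bound construction. The paper's tail approach is shorter on the page but relies on a dominance step that, as literally stated, needs a small repair; yours is airtight as written.
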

  \begin{proof}
    Follows from the fact that $\{\max\{X + Y, X + Z\} > c\} \sse \{
    \max\{X + Y, X' + Z\} > c\}$.
  \end{proof}

  Finally, for any path with $k$ edges, we expect to get value $1$ in
  expectation. The probability that any one path gives value $c \log k$
  is $\frac{1}{k^3}$, for suitable constant $c$. So a union bound
  implies that the maximum value over $k^2$ path is at most $c \log k$
  with probability $1/k$. Finally, the XOS function can take on value at
  most $k$, so the expected value is at most $1 + c \log k$.
\end{itemize}

This shows an adaptivity gap of $\Omega(\frac{k}{\log k}) = 
\Omega(\frac{\log n}{(\log \log n)^2})$ even for cardinality constraints.

\section{Conclusions}
\label{sec:conclusions}

In this paper we saw that submodular functions, both monotone and
non-monotone, have a constant adaptivity gap, with respect to all
prefix-closed probing constraints. Moreover, for monotone XOS functions
of width $W$, the adaptivity gap is $O(\log W)$, and there are
nearly-matching lower bounds for all $W = O(n)$. 

The most obvious open question is whether for \emph{all} XOS functions,
the adaptivity gap is $O(\log^c n)$ for some constant $c \geq 1$. This
would immediately imply an analogous result for all \emph{subadditive}
functions as well. (In \S\ref{sec:non-monotone-xos} we show that it
suffices to bound the adaptivity gap for \emph{monotone} XOS and
subadditive functions.)

Other questions include: can we get better bounds for special submodular
functions of interest? E.g., for matroid rank functions, can we improve
the bound of $3$ from Theorem~\ref{thm:monotSubmod}. We can improve the
constants of $40$ for the non-monotone case with more complicated
analyses, but getting (near)-tight results will require not losing the
factor of $4$ from~(\ref{eq:fmv}), and may require a new insight.  Or
can we do better for special prefix-closed constraints. Our emphasis was
to give the most general result we could, but it should be possible to
do quantitatively better for special cases of interest.




\bibliographystyle{alpha}
{\small \bibliography{probe}}

\appendix

\section{Monotonicity of XOS and Subadditive Functions}
\label{sec:non-monotone-xos}

Our definition of fractionally subadditive/XOS differs from the usual
one, since it allows the function to be non-monotone. To show the
difference, here is the usual definition:

\begin{itemize}
\item A function $f$ is \emph{monotone fractionally subadditive} if
  $f(T) \leq \sum_i \alpha_i f(S_i)$ for all $\chi_T \leq \sum_i
  \alpha_i \chi_{S_i}$ with $\alpha_i \geq 0$. Note the subtle
  difference: we now place a constraint when the set $T$ is fractionally
  covered by the sets $S_i$. Note that such a function is always
  monotone: for $T \sse S$ we have $\chi_T \leq \chi_S$ and hence $f(T)
  \leq f(S)$.

  Similarly, we can define a function $f$ to be \emph{monotone XOS}
  (a.k.a.\ max-of-sums) if there exist linear functions $\C_1, \C_2,
  \ldots, \C_w: 2^X \to \R_{\ge 0}$ such that $f(X) = \max_j \{ \C_j(X)
  \}$. The difference is that we only allow non-negative coefficients in
  the linear functions.
\end{itemize}

The equivalence of these definitions is shown in~\cite{Feige-SICOMP09}.
It is also known that the class of monotone XOS functions lies between
monotone submodular and monotone subadditive functions. These same
proofs, with minor alterations, show that XOS functions are the same as
fractionally subadditive functions (according to the definitions in
\S\ref{sec:prelims}), and lie between general submodular and general
subadditive functions. 

Finally, if $f$ satisfies the XOS definition in \S\ref{sec:prelims}, and
$f$ is monotone, it also satisfies the definition above. Indeed, by
duplicating sets and dropping some elements, we can take the sets $T,
\{S_i\}$ and values $\alpha_i \geq 0$ satisfying $\chi_T \leq \sum_i
\alpha_i \chi_{S_i}$, and get sets $S_i' \sse S_i$ satisfying $\chi_T =
\sum_i \alpha_i \chi_{S'_i}$.
By the general XOS definition, we get
$f(T) \leq \sum_i \alpha_i f(S'_i)$, which by monotonicity is at most
$\sum_i \alpha_i f(S_i)$. Hence, the definitions in \S\ref{sec:prelims}
and above are consistent.

\subsection{Adaptivity Gaps for Non-Monotone Functions}

It suffices to prove the adaptivity gap conjecture for monotone XOS or 
subadditive functions, since for any XOS function $f$, the
function $\fmax$ is also XOS (shown below). Note that $\fmax$ is clearly monotone.  So we can just deal with the
monotone XOS/subadditive function $\fmax$. (We note that such a property is not true for submodular functions, i.e. $f$ being submodular  does not imply that $\fmax$ is.) 

Consider any (possibly non-monotone) XOS function $f$. We will show that
$\fmax$ is fractionally subadditive, i.e., for any $T\sse X$, $\{S_i\sse
X\}$ and $\{\alpha_i\ge 0\}$ with $\chi_T = \sum_i \alpha_i \chi_{S_i}$,
$\fmax(T) \leq \sum_i \alpha_i \fmax(S_i)$.

Consider any $T, \{S_i\}, \{\alpha_i\}$ as above. 
Let $U\sse T$ be the set achieving the maximum in $\fmax(T)$, i.e.
$\fmax(T)=f(U)$. Now consider the linear combination of the sets
$\{S_i\cap U\}$ with multipliers $\{\alpha_i\}$. We have $\chi_U =
\sum_i \alpha_i \chi_{S_i\cap U}$. So, by the fractionally subadditive
property of $f$,
    $$f(U)\le \sum_i \alpha_i\cdot f(S_i\cap U) \le \sum_i \alpha_i\cdot \fmax(S_i).$$
    The last inequality above is by definition of $\fmax$ as $S_i\cap U\sse S_i$. 
        Thus we have 
$\fmax(T) \leq \sum_i \alpha_i \fmax(S_i)$ as desired.

\section{ Large Adaptivity Gap for Arbitrary  Functions }\label{sec:adapgapbad}
Consider a monotone function $f$ on $k=\sqrt{n}$  types of items, with $k$ items of each type (total $k^2=n$ items). On any set $S$ of items, function $f$ takes value $1$ if $S$ contains at least one item of every type, and takes value $0$ otherwise. Suppose each item is active independently w.p. $1/2$ and the constraint allows us to probe at most $4k$ items. The optimal non-adaptive strategy here is to probe $4$ items of each type. This strategy has an expected value of $(\frac{15}{16})^k$. On the other hand, consider an adaptive strategy that arbitrarily orders the types and probes items of a type until it sees an active copy, and then moves to the next type. Since in expectation this strategy only probes $2$ items of a type before moving to the next, with constant probability it will see an active copy of every type within the $4k$ probes. Hence, the adaptivity gap for this example is $\Omega(16/15)^k$.

\end{document}